\newcommand{\version}{September 22, 2012}
\font\notefont=cmsl8
\theoremstyle{plain}
\newtheorem{thm}{THEOREM}[section]
\newtheorem{lm}[thm]{LEMMA}
\newtheorem{cl}[thm]{COROLLARY}
\theoremstyle{definition}
\theoremstyle{remark}
\newtheorem{remark}[thm]{REMARK}
\newcommand{\upchi}{\raise1pt\hbox{$\chi$}}
\newcommand{\R}{{\mathord{\mathbb R}}}
\newcommand{\C}{{\mathord{\mathbb C}}}
\newcommand{\Z}{{\mathord{\mathbb Z}}}
\newcommand{\rS}{{\mathord{\mathbb S}}}
\newcommand{\cH}{{\mathord{\cal H}}}
\newcommand{\Tr}{{\mathord{\rm Tr}}}
\newcommand{\slangle}[1]{{_{_{#1}}}\mspace{-3mu}\langle}
\newcommand{\srangle}[1]{\rangle\mspace{-4mu}{_{_{#1}}}}
\begin{document}
\title{Proof of an entropy conjecture for Bloch coherent spin states
  and its generalizations\let\thefootnote\relax\footnotetext{\copyright 2012 \ by the authors. This article may be
      reproduced in its entirety for non-commercial purposes.}} %
\author{
  \begin{tabular}{ccc}
    Elliott H. Lieb &\hspace{2cm}& Jan Philip
    Solovej\\
    \normalsize Departments of Physics and Mathematics && 
    \normalsize Department of Mathematics\\ 
    \normalsize Jadwin Hall, Princeton University && \normalsize University of Copenhagen\\
    \normalsize Washington Road&&\normalsize Universitetsparken 5\\
    \normalsize Princeton, N.J. 08542  & &
    \normalsize DK-2100 Copenhagen, Denmark\\
    \normalsize {\it e-mail\/}: lieb@princeton.edu &&
    \normalsize {\it e-mail\/}: solovej@math.ku.dk
  \end{tabular}
  \bigskip
  \date{\version}} 

                                \maketitle

\begin{abstract}
  Wehrl used Glauber coherent states to define a map from quantum
  density matrices to classical phase space densities and conjectured
  that for Glauber coherent states the mininimum classical entropy
  would occur for density matrices equal to projectors onto coherent
  states. This was proved by Lieb in 1978 who also extended the
  conjecture to Bloch $SU(2)$ spin-coherent states for every angular
  momentum $J$. This conjecture is proved here. We also recall our
  1991 extension of the Wehrl map to a quantum channel from $J$ to
  $K=J+1/2, \, J+1,\, \ldots$, with $K=\infty$ corresponding to the
  Wehrl map to classical densities. For each $J$ and $J<K\leq \infty$
  we show that the minimal output entropy for these channels occurs
  for a $J$ coherent state. We also show that coherent states both
  Glauber and Bloch minimize any concave functional, not just entropy.

\end{abstract}

\vfill\eject
\section{Introduction}

Coherent states in the Hilbert space $\cH=L^2(\R^n)$, which originate in 
the work of Schr\"odinger, Bargmann, Glauber and others, are certain normalized Gaussian functions
parametrized by points in classical phase space, $(p,q)  \in \R^n\times \R^n$. They are denoted by
$|p,q \rangle$. See, e.g., \cite{L2} or Section~\ref{sec:basic} for definitions. 

Given a density matrix $\rho$ on $\cH$ (i.e., $\rho $ is a positive
semi-definite operator with trace $\Tr \rho =1$), its von Neumann
entropy is $S(\rho) =-\Tr \rho \ln \rho$. This is always non-negative,
but the usual classical Boltzmann type density (e.g.,
$f(p,q)=Z^{-1}\exp -\beta (p^2 +V(q))$ can have an arbitrarily
negative entropy $-\int_{\R^n\times \R^n}f(p,q) \, \ln f(p,q) dp
dq$. To rememdy this, and other problems with the classical
approximation to entropy, A. Wehrl \cite{WA} used coherent states to
propose another definition, as follows:

Define $\rho^{\rm cl}(p,q) := \langle p,q| \, \rho \, |p,q\rangle $. This
function has several names.  One is the Husimi $Q$-function
\cite{Hu}. Berezin~\cite{Be} called it the covariant symbol. The name
we shall use here is {\it the lower symbol} of $\rho$ (see \cite{L1}).
Since $(2\pi)^{-n} \int_{\R^n\times \R^n} |p,q\rangle \langle p,q| dp
dq$ is the unit operator on $\cH$, we see that $\rho^{\rm cl}(p,q)$ is a
probability density with respect to the classical measure $(2\pi)^{-n}
dp dq$.  In \cite{Be,L1} it is shown that the trace $\Tr(f(\rho))$ of a
convex function $f$ of $\rho$ is bounded below by the corresponding
classical integral $(2\pi)^{-n} \int_{\R^n\times \R^n}
f(\rho^{\rm cl}(p,q))dpdq$. Together with the corresponding upper bound
for what is called the upper symbol (or contravariant symbol in
\cite{Be}) they are often referred to as the Berezin-Lieb
inequalities. The inequalities are, of course, reversed for concave
functions. In this paper we shall not be concerned with the upper
symbol. 

Wehrl uses the lower symbol, to define 
\begin{equation}\label{eq:wehrl}
S^{\rm cl}(\rho) = - (2\pi)^{-n} \int_{\R^n\times \R^n}\rho^{\rm
  cl}(p,q) \, \ln \rho^{\rm cl}(p,q) dp dq
\end{equation} 
Since $0\leq\rho^{\rm cl}(p,q)\leq 1$, we see that $S^{\rm cl}(\rho) \geq 0$, as desired. 

A question raised by Wehrl is what $\rho $ minimizes $S^{\rm
  cl}(\rho)$, and he conjectured that this occurs exactly when $\rho$
is a one-dimensional projection onto any coherent state, i.e., $\rho =
|p,q\rangle \langle p,q|$ for any choice of $p,q$.  This conjecture
was proved in \cite{L2}. Later Carlen \cite{Ca} gave a different proof
based on the log-Sobolev inequality; this proof included the
uniqueness statement for the first time. A decade later Luo \cite{Lu}
gave a proof based on hypercontractivity, which is
closely related to the log-Sobolev inequality. 

In \cite{L2} a similar conjecture was made for the coherent states in
the Hilbert spaces of the irreducible representations of
$SU(2)$. These {\it angular momentum coherent states} are very useful
for the physics of quantum spin systems. They are usually called Bloch
coherent states \cite{Bl,Ra} and will be the subject of this
paper. {\it We shall prove the conjecture in \cite{L2} that the analog
  of Wehrl's conjecture holds here as well}, i.e., that the classical
entropy is minimized by coherent states. We do not prove that
coherent states alone minimize the entropy, however.

Previously, the conjecture had been established only for a few
low-dimensional representations of $SU(2)$. The 2-dimensional spin $1/2$ case is
simple as all pure states are coherent states. This was already
pointed out in \cite{L2}. For the 3-dimensional case of spin $1$ the
conjecture was solved by Scutaru \cite{Scu} and by Schupp
\cite{Sch} who also solved it for the 4-dimensional representation
corresponding to spin $3/2$. Bodmann \cite{Bo} proved a lower bound on
the classical entropy which is asymptotically correct for large spin
$J$.

We prove more than that $S^{\rm cl}(\rho)$ is minimized when $\rho$ is
a projection onto a coherent state. We prove this for {\it all concave
functions} $f(t)$, not just $f(t)=-t\ln(t)$. In fact, the original
proof in \cite{L2}
for the Gaussian Glauber states was for $f(t)=-t^p$, $p\geq1$, and the
proof for $-t\ln(t)$ followed by taking the limit $p\to1$.  
The extension to $f(t)=t^p$ for $0<p<1$ was given by Carlen
\cite{Ca}.  To our knowledge it has not been proved for
general concave functions. In Theorem~\ref{thm:glauber} we show
exactly that by approximating Glauber coherent states by
Bloch coherent states in an appropriate limit of large
spin. The particulars of this approximation are in the appendix. 

In order to prove the conjecture for all spin we utilize a
generalization of coherent states, called {\it coherent operators}
introduced by us in \cite{LiSo}. These are operators that map density
matrices in one $SU(2)$ space, characterized by an angular momentum
(or "spin") $J$ to a density matrix in a spin $K$ space. These maps
are, in fact, {\it quantum channels}, i.e., completely positive trace
preserving maps, as will be made clear in Lemma~\ref{lm:Phi2nd}.  

The Bloch coherent states map density matrices from $J$ to functions
on the classical phase space, i.e., the $2$-sphere $\rS^2$
(which can be thought of as $K= \infty$ \cite{L1}) via the lower
symbol map mentioned above. 
We do this in small steps, so to speak, by going from $J$ to $J+1/2$
to $J+1,\ldots$.  For each finite $K$ on the way we prove that
projections onto coherent states in $J$ minimize the von Neumann
entropy of the lifted density matrix in $K$. In other words we
determine the {\it minimal output entropy} of the quantum coherent
operator channels to be the entropy of the output of a coherent
state. We then show that after an appropriate scaling, the
limit $K\to \infty$ gives us the desired classical (lower symbol)
entropy, and thus we prove the conjecture for the entropy and for any concave
function. 

An important observation in our procedure is to note that the quantum
coherent operator channels have a simple expression in terms of
bosonic second quantization, i.e., bosonic creation and annihilation
operators. 

Coherent states can be generalized to any compact semisimple Lie
group, not just $SU(2)$ (see \cite{Per,Si}), and there we expect that a
similar result holds.

In Section~\ref{sec:basic} we define Bloch coherent states, lower
symbols, and discuss the corresponding Berezin-Lieb inequality. We
also introduce the quantum coherent operator channels from $J$ to
$K$. (While we are interested here in $K>J$, the map is also defined
when $K<J$.)  In Section~\ref{sec:formula} we derive a more explicit
formula for the quantum coherent operator channel which allows us, in
Section~\ref{sec:bose}, to give a bosonic second quantization
representation of the channels. In fact, Section~\ref{sec:formula} is
not important for our main conclusion as we could have defined the
quantum channels from the second quantization formulation in
Section~\ref{sec:bose}. We include Section~\ref{sec:formula} in order
to connect to our previous work in \cite{LiSo}. Following that, we
show, in Section~\ref{sec:main}, that coherent states in $J$ minimize
the output von Neumann entropy in $K$ or more generally the trace of
any concave function. In the last Section~\ref{sec:classicallimit} we
study the classical limit $K\to \infty$ and use the Berezin-Lieb
inequality to prove the conjecture on the classical entropy.

{\bf Acknowledgment:} Part of this work was carried out at the
Mathematics Department of the Technical University, Berlin and at the
Newton Institute, Cambridge. We are grateful to both and, in particular,
to Ruedi Seiler for hosting our stay in Berlin.

Thanks go to Eric Carlen, Rupert Frank, and Peter Schupp for their
helpful comments on a preliminary version of the manuscript and to Anna
Vershynina for a careful reading of the manuscript.

This work was partially supported by U.S. National Science
Foundation grant PHY-0965859 (Elliott Lieb ), by a grant from the Danish
research council (Jan Philip Solovej) and by a grant from the Simons
Foundation (\#230207 to Elliott Lieb).

\section{Basic definitions and main results} \label{sec:basic}

For all integer or half-integer $J$ we let $\cH_J$ denote the spin $J$
representation space of $SU(2)$, i.e., $\cH_J=\C^{2J+1}$. The
corresponding classical phase space is $\rS^2$, the unit sphere in
$\R^3$. For each point $\omega\in\rS^2$ we have the one-dimensional
coherent state projection
$P_\omega^J=|\omega\srangle{J}\slangle{J}\omega|$ projecting $\cH_J$
onto the subspace of maximal spin in the direction $\omega$, i.e., the
one-dimesional subspace of $\cH_J$ corresponding to the eigenspace of
$\omega\cdot {\bf S}_J$ with eigenvalue $J$. Here ${\bf S}_J$ is the
vector of spin operators, i.e, the representation on $\cH_J$ of the
standard generators ${\bf S}=(S_x,S_y,S_z)$ of $SU(2)$.  The vector
$|\omega\srangle{J}$ is only defined up to a phase, but this will not
play a role here as only the projection $P_\omega^J$ is important.  We
will use the notation that $\uparrow,\downarrow\in\rS^2$ are
respectively the north and south pole.

The coherent state transform is based on the identity that 
\begin{equation}\label{eq:identityres}
\frac{2J+1}{4\pi}\int_{\rS^2} |\omega\srangle{J}\slangle{J}\omega|
d\omega=\frac{2J+1}{4\pi}\int_{\rS^2} P_\omega^J
d\omega=I_J,
\end{equation}
where $I_J$ is the identity on $\cH_J$. 
If $\rho$ is a density matrix on $\cH_J$ its {\it lower symbol} is the
function on
$\rS^2$ given by
\begin{equation}
\Phi^\infty(\rho)(\omega)=\slangle{J}\omega|\rho|\omega\srangle{J}=\Tr_J(P_\omega^J\rho),
\end{equation}
where $\Tr_J$ is the trace on $\cH_J$.  The classical entropy of
$\rho$ is
$$
S^{\rm cl}(\rho)=-\frac{2J+1}{4\pi}\int_{\rS^2} \Phi^\infty(\rho)(\omega)\ln(\Phi^\infty(\rho)(\omega))d\omega.
$$
We are using the notation $\Phi^\infty$ for the lower symbol since we
shall consider it as the natural classical limit $k\to\infty $ of the quantum
channels $\Phi^k$  defined below.

The Berezin-Lieb \cite{Be,L1} inequality for the lower symbol states
that for any concave function $f:[0,1]\to\R$ we have 
\begin{equation}\label{eq:BL}
\Tr_J f(\rho)\leq \frac{2J+1}{4\pi}\int_{\rS^2} f(\Phi^\infty(\rho)(\omega))d\omega.
\end{equation}
 The inequality follows from
(\ref{eq:identityres}) as a consequence of Jensen's inequality. 

The conjecture from \cite{L2} that we shall prove here is that $S^{\rm
  cl}$ is minimized when the density matrix is any coherent state
projection, e.g.,  $\rho=|\uparrow\srangle{J}\slangle{J}\uparrow|$. In
this case the lower symbol is
$\Phi^\infty(|\uparrow\srangle{J}\slangle{J}\uparrow|)(\omega)
=|\slangle{J}\omega|\uparrow\srangle{J}|^2$.
In fact, we shall prove the more general statement that the same is
true if the function $-t\ln(t)$ is replaced by any concave
funtion. Our main theorem is the following.
\begin{thm}[Lower symbols of Bloch coherent states minimize concave averages]\label{thm:conjecture}
Let $f:[0,1]\to\R$ be a concave function\footnote{\label{footnote} It is, in fact,
  enough to assume that $f:[0,1)\to\R$, i.e., to allow that
  $\lim_{t\to0-}f(t)=-\infty$. Only coherent state projections have
  lower symbols that attain the value 1. If $\rho$ is not a coherent
  state projection
  we can find a concave function $\widetilde f:[0,1]\to\R$ such that
  $\widetilde f\geq f$ and $\widetilde
  f(\slangle{J}\omega|\rho|\omega\srangle{J})=f(\slangle{J}\omega|\rho|\omega\srangle{J})$.}. Then for any
density matrix $\rho$ on $\cH_J$ we have 
\begin{equation}\label{eq:conjecture}
\frac{2J+1}{4\pi}\int_{\rS^2}f(\slangle{J}\omega|\rho|\omega\srangle{J}) d\omega
\geq \frac{2J+1}{4\pi}\int_{\rS^2}f(|\slangle{J}\omega|\uparrow\srangle{J}|^2)d\omega.
\end{equation}
By  $SU(2)$ invariance $\uparrow$ could be replaced by any other point on
$\rS^2$. 
\end{thm}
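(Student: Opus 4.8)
The plan is to reduce to pure states by convexity, reinterpret the claim as a majorization of Husimi functions, and then establish that majorization by interpolating with the finite quantum channels and passing to the classical limit. By the $SU(2)$ invariance noted in the statement I may take the distinguished point to be $\uparrow$. For fixed $\omega$ the map $\rho\mapsto\slangle{J}\omega|\rho|\omega\srangle{J}$ is affine and $f$ is concave, so $\rho\mapsto f(\slangle{J}\omega|\rho|\omega\srangle{J})$ is concave, and integration against $\tfrac{2J+1}{4\pi}\,d\omega$ preserves concavity; hence the left-hand side of~(\ref{eq:conjecture}) is concave in $\rho$ and attains its minimum over the compact convex set of density matrices at an extreme point, i.e.\ a pure state $\rho=|\psi\rangle\langle\psi|$. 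It therefore suffices to treat pure states. Writing $h_\psi(\omega)=|\slangle{J}\omega|\psi\rangle|^2$ for the resulting Husimi function, the resolution of the identity~(\ref{eq:identityres}) gives $\tfrac{2J+1}{4\pi}\int_{\rS^2}h_\psi\,d\omega=1$ for every unit vector $\psi$; thus all these functions carry the same mass, and~(\ref{eq:conjecture}), as $f$ ranges over all concave functions, is exactly the assertion that the coherent-state Husimi function \emph{majorizes} every other pure-state Husimi function, from which the individual concave inequalities follow.

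Attacking the majorization of $h_\psi$ directly---it concerns rearrangements of the functions $|\slangle{J}\omega|\psi\rangle|^2$ over the whole sphere---appears no easier than the original problem, so instead I would approximate the classical lower-symbol map $\Phi^\infty$ by the finite quantum channels $\Phi^k$, which send density matrices on $\cH_J$ to density matrices on $\cH_K$ with $K=J+k/2$ and which are completely positive and trace preserving (Lemma~\ref{lm:Phi2nd}). The key is the explicit bosonic second-quantized form of these channels from Section~\ref{sec:bose}, in which computations are tractable. The finite-$K$ target is the eigenvalue statement
\[
\Phi^k(P_\uparrow^J)\ \succ\ \Phi^k(\rho)\qquad\text{for every density matrix }\rho\text{ on }\cH_J,
\]
namely that the coherent-state output \emph{majorizes} every other output; equivalently $\Tr_K f(\Phi^k(\rho))\ge\Tr_K f(\Phi^k(P_\uparrow^J))$ for all concave $f$, so that the coherent state realizes the minimal output von Neumann entropy (taking $f(t)=-t\ln t$) and, more generally, the minimal output concave trace of the channel $\Phi^k$.

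To prove the finite-$K$ majorization I would first reduce to pure inputs by the same concavity argument, since $\rho\mapsto\Tr_K f(\Phi^k(\rho))$ is concave (a concave function composed with the affine channel $\Phi^k$). I would then induct on $k$, adding a single half-unit of spin at a time, $K\to K+\tfrac12$. In the bosonic picture $P_\uparrow^J$ is the unique highest-weight vector and is mapped to an explicit eigenvalue profile that one can compute directly and compare against the output of an arbitrary pure input. The inductive step must show that one elementary channel step cannot allow any competitor to overtake the coherent-state output in the majorization order. \textbf{I expect this step to be the main obstacle}: majorization is not preserved under general channels, so the argument must exploit the specific creation/annihilation-operator structure of the elementary step together with the fact that the coherent state persists as the highest-weight vector at every level.

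Finally I would pass to the limit $K\to\infty$. Under the natural, $\rho$-independent normalization, the scaled empirical distribution of the eigenvalues of $\Phi^k(\rho)$ converges to the distribution of the Husimi function $\Phi^\infty(\rho)(\omega)$ with respect to $\tfrac{2J+1}{4\pi}\,d\omega$; the Berezin--Lieb inequality~(\ref{eq:BL}), applied at level $K$, is the natural device relating the trace $\Tr_K f(\Phi^k(\rho))$ to a classical integral and pins down this limit. Consequently the normalized finite-$K$ functional converges to $\tfrac{2J+1}{4\pi}\int_{\rS^2}f(\Phi^\infty(\rho)(\omega))\,d\omega$ for each fixed $\rho$, and since the normalization is order preserving and independent of $\rho$, the finite-$K$ inequality $\Tr_K f(\Phi^k(\rho))\ge\Tr_K f(\Phi^k(P_\uparrow^J))$ passes to the limit to give~(\ref{eq:conjecture}). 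The boundary case in which $\lim_{t\to0-}f(t)=-\infty$ (footnote~\ref{footnote}) is handled by first replacing $f$ with a finite concave majorant $\widetilde f$ agreeing with $f$ on the range of the relevant symbols and then removing the truncation.
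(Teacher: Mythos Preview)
Your strategy is the paper's: prove the finite-$K$ majorization $\Phi^k(P_\uparrow^J)\succ\Phi^k(\rho)$ via the bosonic second-quantized form of the channels, then pass to $K\to\infty$. Two points of execution differ and are worth flagging.

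First, the finite-$K$ step in the paper is a \emph{double} induction: on the partial-sum index $m$ in the majorization inequalities $\sum_{j\le m}\lambda_j^k(\psi)\le\sum_{j\le m}\lambda_j^{C,k}$, and then, for each fixed $m$, on $k$. A single induction on $k$ of the sort you sketch does not directly carry the full majorization ordering from level $k$ to level $k+1$; the paper's key identity~(\ref{eq:iteration1}) together with the explicit optimizer~(\ref{eq:Gammaoptimizer}) produces a recursion that consumes \emph{both} induction hypotheses (on $m$ and on $k$) simultaneously. This is exactly the place you correctly anticipate as the main obstacle.

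Second, your limit step overreaches. The Berezin--Lieb inequality is one-sided and so cannot ``pin down'' convergence of the normalized trace to the classical integral for general $\rho$; the paper explicitly declines to prove such convergence. Instead it uses Berezin--Lieb only as an \emph{inequality} for arbitrary $\rho$ (Lemma~\ref{lm:berezinlieb}), and computes the exact $K\to\infty$ limit solely for the coherent-state input, where the eigenvalues are explicit (Lemma~\ref{lm:coherentlimit}). One then chains
\[
\frac{2J+1}{2K+1}\Tr_K f\!\Bigl(\tfrac{2K+1}{2J+1}\Phi^k(P_\uparrow^J)\Bigr)
\ \le\ \frac{2J+1}{2K+1}\Tr_K f\!\Bigl(\tfrac{2K+1}{2J+1}\Phi^k(\rho)\Bigr)
\ \le\ \frac{2J+1}{4\pi}\int_{\rS^2} f\bigl(\slangle{J}\omega|\rho|\omega\srangle{J}\bigr)\,d\omega
\]
and lets $K\to\infty$ only on the left. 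Your proposed full convergence for general $\rho$ is true but unnecessary and not delivered by Berezin--Lieb alone.
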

The following analogous result for the Glauber coherent states is
proved by an
easy limiting argument which we give in the appendix. 
\begin{thm}[Lower symbols of Glauber coherent states minimize concave
  averages]\label{thm:glauber}
Let $f:[0,1]\to\R$ be a continuous concave function with  $f(0)=0$. Then for any
density matrix $\rho$ on $L^2(\R^n)$ we have 
\begin{equation}\label{eq:genwehrl}
(2\pi)^{-n} \int_{\R^n\times \R^n}f(\rho^{\rm
  cl}(p,q) )dp dq \geq (2\pi)^{-n} \int_{\R^n\times \R^n}f(|\langle p_0,q_0|p,q\rangle|^2)dp dq 
\end{equation}
for all $p_0,q_0\in \R^n$. 
\end{thm}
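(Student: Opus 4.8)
The plan is to deduce Theorem~\ref{thm:glauber} from the already-established $SU(2)$ result, Theorem~\ref{thm:conjecture}, by a large-spin contraction in which Bloch coherent states on $\cH_J$ degenerate into Glauber coherent states on $L^2(\R^n)$. By translation invariance of the right-hand side of \eqref{eq:genwehrl} I may take $(p_0,q_0)=0$, so the reference state is the vacuum, and I first treat a single mode, $n=1$. Identify $\cH_J=\C^{2J+1}$ with the span of the lowest $2J+1$ harmonic-oscillator (Fock) states by sending the highest-weight vector $|\uparrow\srangle{J}$ to the vacuum $|0\rangle$ and, in general, $|J,J-k\rangle$ to $|k\rangle$. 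Zooming in at the north pole in the regime where the polar angle $\theta$ is replaced by a rescaled planar coordinate $z\sim\sqrt{2J}\,\theta/2$, one checks that $|\omega\srangle{J}$ converges, as $J\to\infty$ with $z$ fixed, to the Glauber coherent state labeled by $z$, and that $\tfrac{2J+1}{4\pi}\,d\omega$ converges to $(2\pi)^{-1}\,dp\,dq$. In particular $|\slangle{J}\omega|\uparrow\srangle{J}|^2\to|\langle 0|p,q\rangle|^2$, which is exactly what identifies the limit of the right-hand side of \eqref{eq:conjecture} with that of \eqref{eq:genwehrl}.

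For the left-hand side I would prove the inequality first for a dense class of $\rho$, namely those of finite rank supported on finitely many low Fock states, and then pass to general $\rho$ by approximation. For such a $\rho$ the identification above produces a density matrix $\rho_J$ on $\cH_J$ whose rescaled lower symbol $\Phi^\infty(\rho_J)$ converges pointwise to $\rho^{\rm cl}$. Applying Theorem~\ref{thm:conjecture} to $\rho_J$ for each $J$ and letting $J\to\infty$ should then give
\begin{equation*}
(2\pi)^{-1}\!\int_{\R^2}\! f(\rho^{\rm cl})\,dp\,dq=\lim_{J\to\infty}\tfrac{2J+1}{4\pi}\!\int_{\rS^2}\! f(\Phi^\infty(\rho_J))\,d\omega\ \geq\ \lim_{J\to\infty}\tfrac{2J+1}{4\pi}\!\int_{\rS^2}\! f(|\slangle{J}\omega|\uparrow\srangle{J}|^2)\,d\omega=(2\pi)^{-1}\!\int_{\R^2}\! f(|\langle 0|p,q\rangle|^2)\,dp\,dq.
\end{equation*}

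The delicate step, and the one that must be carried out with care in the appendix, is the interchange of limit and integral over the \emph{whole} sphere. In any fixed window $|z|\le R$ near the pole, pointwise (indeed uniform, for finite-rank $\rho$) convergence of the integrand together with convergence of the measure gives the claimed local limit. The real danger is the complementary region $\{\theta>\delta\}$, whose measure grows like $J$: one must exclude a spurious contribution from this large volume, and here the hypothesis $f(0)=0$ is essential. Because $\rho_J$ is built from finitely many low Fock states, its lower symbol is bounded by a polynomial in $J$ times $(\cos\tfrac\theta2)^{2J}$, hence is exponentially small on $\{\theta>\delta\}$; since $f$ is continuous with $f(0)=0$, $f(\Phi^\infty(\rho_J))$ is then exponentially small there and its integral against the $O(J)$ measure tends to $0$. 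Letting $R\to\infty$ and using the Gaussian-type decay of $\rho^{\rm cl}$ (for finite-rank $\rho$) to control the tail of the planar integral completes the finite-rank case; the same mechanism handles the convergence of the right-hand side, which is just the case $\rho_J=|\uparrow\srangle{J}\slangle{J}\uparrow|$. The passage to general $\rho$ is by trace-norm approximation together with the continuity properties of $\rho\mapsto\int f(\rho^{\rm cl})$, again using $f(0)=0$ and the boundedness of $f$.

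Finally, the general $n$-mode case reduces to the one-mode statement by tensorization. Fixing modes $2,\dots,n$ at the vacuum coherent state and integrating first over the last $n-1$ labels, the one-mode inequality applies to the reduced density matrix of the first mode with the concave function $t\mapsto f(c\,t)$, where the scaling constant $c\in[0,1]$ is itself the value of a lower symbol of a reduced density matrix; since $c\mapsto(2\pi)^{-1}\int f(c\,|\langle 0|p,q\rangle|^2)\,dp\,dq$ is again concave, iterating the one-mode result over the remaining modes yields \eqref{eq:genwehrl} for all $n$. The same iteration applies verbatim on the $SU(2)$ side, so one may instead tensorize Theorem~\ref{thm:conjecture} to a product of spheres and contract at the very end.
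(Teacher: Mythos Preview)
Your overall route is the paper's: reduce to $n=1$, identify $\cH_J$ with the lowest $2J+1$ Fock states so that $|\uparrow\srangle{J}$ becomes the vacuum, prove \eqref{eq:genwehrl} first for $\rho$ supported on finitely many low Fock states by applying Theorem~\ref{thm:conjecture} to the transplanted $\rho_J$ and letting $J\to\infty$, and then extend by trace-norm density. Your iteration to general $n$ via the derived concave function $c\mapsto(2\pi)^{-1}\int f(c\,|\langle 0|p,q\rangle|^2)\,dp\,dq$ is exactly the paper's induction.

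There is, however, a genuine gap in two places that the paper closes with a single preliminary reduction you omit. On $\{\theta>\delta\}$ you argue that the lower symbol is exponentially small and hence, ``since $f$ is continuous with $f(0)=0$,'' so is $f$ of it, enough to beat the $O(J)$ measure. Continuity at $0$ is not sufficient for this: a concave $f$ with $f(0)=0$ can behave like $1/|\ln t|$ near $0$, and then $J\cdot f(r^J)\not\to 0$. The same issue hits your trace-norm approximation step: controlling $\int|f(\rho_1^{\rm cl})-f(\rho_2^{\rm cl})|$ over an infinite-measure phase space by $\|\rho_1-\rho_2\|_1$ requires a Lipschitz bound $|f(x)-f(y)|\le C|x-y|$, not merely boundedness and $f(0)=0$. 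The paper fixes both at once by first adding a multiple of $t$ to make $f\ge 0$ and then reducing, via monotone convergence, to \emph{piecewise linear} (hence Lipschitz) $f$. After that reduction your near-pole/far-from-pole splitting does work; the paper in fact dispenses with the splitting altogether and uses dominated convergence with the explicit integrable majorant $C_{\rho,N}(1+|z|^2/(2(N+2)))^{-2}$.
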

\begin{remark}As in the main Theorem~\ref{thm:conjecture} we could have allowed
$\lim_{t\to1-}f(t)=-\infty$ (see Footnote~\ref{footnote}).  We could, in fact, also allow
$f(0)\ne0$, but in this case the integrals on both sides of
(\ref{eq:genwehrl}) are either both $+\infty$ or both $-\infty$. Even
if $f(0)=0$ the integrals may still be $+\infty$, but the inequality
holds in the sense that either both sides are $+\infty$ or the right
side is finite.
\end{remark}
Using the fact that the Glauber coherent state $|p,q\rangle\in L^2(\R^n)$ is
explicitly given by
$$
\pi^{-n/4}\exp(-(x-q)^2/2+ipx),
$$
we have
$$
\langle p,q|\psi\rangle=\pi^{-n/4}\int \exp(-(x-q)^2/2-ipx)\psi(x)dx,
$$
for $\psi\in L^2(\R^n)$. 
The inequality
(\ref{eq:genwehrl}) for the rank one state $|\psi\rangle\langle\psi|$
then states that
$$
\int_{\R^n\times \R^n}f(|\langle p,q|\psi\rangle|^2)dp dq 
$$
is minimized for concave $f$ when $\psi$ is a Glauber coherent state.

We now define the quantum coherent operator channels. We refer to
\cite{LiSo} for details. 
For fixed $K$ and $J$ we let $P_{-}$  be the projection in 
$\cH_J\otimes\cH_K$ onto the minimal total spin $|K-J|$, i.e., onto the
unique copy of $\cH_{|K-J|}\subseteq \cH_J\otimes\cH_K$ on which the
tensor product representation acts irreducibly\footnote{Strictly
  speaking the isometric imbedding of $\cH_{|K-J|}$ into
  $\cH_J\otimes\cH_K$ is given uniquely only up to a phase.}.
For simplicity we omit in our notation the dependence of
$P_-$ on $K$ and $J$. 

In the language of elementary quantum mechanics a particle of angular
momentum $K$ and one of angular momentum $J$ can combine in exactly
one way to produce a composite particle of angular momentum
$|K-J|$. The Hilbert space of this composite particle is the subspace 
$\cH_{|K-J|}\subseteq \cH_J\otimes\cH_K$.

If we let $k=2K-2J\in\Z$ we consider the map
$\Phi^k$ from operators on $\cH_J$ 
to operators on $\cH_K$ defined by the partial trace 
\begin{equation}
\Phi^{k}(\rho)=
\frac{2J+1}{2|K-J|+1}\Tr_J(P_-(\rho\otimes I_K)).
\end{equation}
This is a trace preserving completely positive map (see also
(\ref{eq:tildePhi}) and Lemma~\ref{lm:Phi2nd}), i.e., using the
language of quantum information theory it is a {\it quantum
  channel}. The trace preserving property is easily seen since the
partial traces $\Tr_J P_-$ and $\Tr_KP_-$ are both proportional to the
identities. In particular, $\Phi^{k}$ maps density matrices to density
matrices.  In the notation we have for simplicity omitted the
dependence of $\Phi^k$ on $K$ and $J$ and only kept the dependence on
the difference in dimension $k=2K-2J$.

Our main result about these channels is that they are majorized by
coherent states in the following sense. 
\begin{thm}[Coherent states majorize $\Phi^k$]\label{thm:Phimajor}
  For a density matrix $\rho$ on $\cH_J$ and $k=2(K-J)$ the sequence
  of eigenvalues of the density matrix $\Phi^k(\rho)$ is majorized by
  the sequence of eigenvalues of
  $\Phi^k(|\omega\rangle\langle\omega|)$, which by $SU(2)$ invariance
  is independent of $\omega\in\rS^2$.
\end{thm}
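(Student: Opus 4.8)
The plan is to verify the two conditions defining majorization of the eigenvalue sequences of $\Phi^k(\rho)$ and $\Phi^k(P^J_\uparrow)$: equality of the total sums, which is automatic because $\Phi^k$ is trace preserving and both are density matrices, and domination of the partial sums $s_n(A):=\sum_{i=1}^n\lambda_i^\downarrow(A)$ of the decreasingly ordered eigenvalues. The first move is to reduce to pure inputs. Each $s_n$ is given by Ky Fan's principle $s_n(A)=\max\{\Tr(A\Pi):\Pi=\Pi^2=\Pi^*,\ \mathrm{rank}\,\Pi=n\}$, a supremum of linear functionals, hence convex and unitarily invariant. Writing a general $\rho=\sum_i p_i|\psi_i\rangle\langle\psi_i|$ and using linearity of $\Phi^k$, convexity yields $s_n(\Phi^k(\rho))\le\sum_i p_i\, s_n(\Phi^k(|\psi_i\rangle\langle\psi_i|))$, so once each pure state is shown to be majorized by the coherent output, so is $\rho$. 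That the majorizing spectrum is independent of $\omega$ is immediate from covariance: $P_-$ commutes with $U_J(g)\otimes U_K(g)$, whence $\Phi^k(U_J(g)\rho U_J(g)^*)=U_K(g)\Phi^k(\rho)U_K(g)^*$, and the coherent projections form a single $SU(2)$ orbit. The same covariance shows $\Phi^k(P^J_\uparrow)$ commutes with $S_z$ on $\cH_K$, hence is diagonal in the standard basis.

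Next I would pass to the Schwinger-boson representation developed in Section~\ref{sec:bose}, realizing $\cH_J$ as the $2J$-boson symmetric subspace of two modes $a^*,b^*$, with $|J,m\rangle\propto (a^*)^{J+m}(b^*)^{J-m}|0\rangle$ and $P^J_\uparrow=|J,J\rangle\langle J,J|$. There $\Phi^k$ takes the Kraus form $\Phi^k(\rho)=\sum_{j=0}^k V_j\rho V_j^*$ with $V_j\propto(a^*)^{k-j}(b^*)^j$ restricted to the symmetric subspace, the binomial weights forced by covariance and the global constant by trace preservation (the identity $\sum_j V_j^*V_j=I_J$ reduces to a Vandermonde summation). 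Each $V_j$ is a weighted shift $V_j|J,m\rangle=w^{(j)}_m|K,m+\tfrac k2-j\rangle$ with explicit $w^{(j)}_m\ge0$ and $\sum_j (w^{(j)}_m)^2=1$. For the coherent input the $k+1$ vectors $V_j|J,J\rangle$ sit in distinct $S_z$-eigenspaces, so they are orthogonal and $\Phi^k(P^J_\uparrow)=\sum_j (w^{(j)}_J)^2|K,K-j\rangle\langle K,K-j|$ is diagonal with the explicit law $d_j=\binom{2K-j}{k-j}/\binom{2K+1}{k}$, already decreasing in $j$, with top value $d_0=(2J+1)/(2K+1)$.

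For a general unit vector $|\psi\rangle=\sum_m c_m|J,m\rangle$ the outputs $V_j|\psi\rangle$ overlap, and since $\Phi^k(|\psi\rangle\langle\psi|)=\sum_j V_j|\psi\rangle\langle\psi|V_j^*$ has the same nonzero spectrum as the $(k+1)\times(k+1)$ Gram matrix $G_{ij}=\langle V_i\psi|V_j\psi\rangle=\sum_m \overline{c_m}\,c_{m+j-i}\,w^{(i)}_m w^{(j)}_{m+j-i}$, while both outputs carry exactly $2J$ extra zero eigenvalues, the theorem reduces to the finite majorization $\mathrm{spec}(G)\prec(d_0,\dots,d_k)$, uniformly over unit $(c_m)$; the diagonal target is attained at the extreme configuration $c_m=\delta_{m,J}$. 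I expect this final inequality to be the main obstacle, and it must be handled carefully: with the diagonal entries $G_{jj}=\sum_m|c_m|^2(w^{(j)}_m)^2$ held fixed, more overlap would spread the spectrum in the wrong direction (Schur--Horn gives $\mathrm{spec}(G)\succ\mathrm{diag}(G)$), so the proof must exploit that the coherent input simultaneously orthogonalizes the outputs and drives the weight distribution to its majorization-maximal shape. I would attack $\mathrm{spec}(G)\prec(d_j)$ either through the Ky Fan partial sums $\sum_{i\le n}\lambda_i(G)=\max_{\dim V=n}\sum_j\|\Pi_V V_j\psi\|^2$, with the anchoring case $n=1$ being the sharp bound $\lambda_{\max}(\Phi^k(|\psi\rangle\langle\psi|))\le(2J+1)/(2K+1)$, or by induction on the number $k$ of added bosons, reducing the single step $k=1$ to a $2\times2$ computation; the delicate point in the inductive route is that composing the elementary channels does not reproduce $\Phi^k$ on the nose but only a reweighting of the same Kraus family, so the induction must be arranged to tolerate such reweightings.
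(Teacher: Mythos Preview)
Your setup matches the paper: the reduction to pure states via convexity of Ky Fan sums, $SU(2)$ covariance, the bosonic Kraus form, the explicit coherent eigenvalues, and the base case $\lambda_{\max}\le(2J+1)/(2K+1)$ are all exactly what is used. But you stop precisely where the content begins. One correction first: the elementary channels compose \emph{exactly}, $\widetilde\Phi^{k+1}(\rho)=(2J+k+2)^{-1}\sum_{i=\uparrow,\downarrow} a_i^*\,\widetilde\Phi^k(\rho)\,a_i$, as one reads off from the Kraus form, so your worry about reweighting is misplaced. The genuine obstacle to a plain induction on $k$ is different: quantum channels do not preserve majorization, so from $\widetilde\Phi^k(|\psi\rangle\langle\psi|)\prec D_k$ one cannot conclude $\widetilde\Phi^1\bigl(\widetilde\Phi^k(|\psi\rangle\langle\psi|)\bigr)\prec\widetilde\Phi^1(D_k)$. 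Thus neither of your two sketched routes, as stated, closes.

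The paper's missing idea is a \emph{double} induction, on the partial-sum index $m$ (outer) and on $k$ (inner). Assuming the inequality for all smaller $m$ at every $k$, and for the current $m$ up to level $k$, one uses the exact composition to write
\[
\sum_{j=0}^m\lambda_j^{k+1}(\psi)=\frac{1}{2J+k+2}\,\Tr\bigl[\Gamma\,\widetilde\Phi^k(|\psi\rangle\langle\psi|)\bigr],\qquad
\Gamma=\sum_{j=0}^m\sum_{i=\uparrow,\downarrow} a_i\,|\phi_j^{k+1}\rangle\langle\phi_j^{k+1}|\,a_i^*,
\]
where $\phi_j^{k+1}$ are the top $m+1$ eigenvectors at level $k+1$. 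The operator $\Gamma$ on $\cH_{J+k/2}$ satisfies $0\le\Gamma\le(2J+k+2)I$ and $\Tr\Gamma=(m+1)(2J+k+1)$; among all such operators the trace against $\widetilde\Phi^k(|\psi\rangle\langle\psi|)$ is maximized by loading the top $m$ eigenvectors of $\widetilde\Phi^k(|\psi\rangle\langle\psi|)$ with the full weight $2J+k+2$ and the $(m+1)$-st with the residual $2J+k+1-m$. This yields
\[
\sum_{j=0}^m\lambda_j^{k+1}(\psi)\le\frac{2J+k+1-m}{2J+k+2}\sum_{j=0}^m\lambda_j^{k}(\psi)+\frac{m+1}{2J+k+2}\sum_{j=0}^{m-1}\lambda_j^{k}(\psi),
\]
and now both induction hypotheses apply to the right-hand side. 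The recursion closes because for the coherent input the $\Gamma$-bound is an \emph{equality} (this is the computation in (\ref{eq:Gammacoherent})), so the same linear combination of coherent partial sums reproduces $\sum_{j\le m}\lambda_j^{C,k+1}$ exactly. This $\Gamma$-operator step, intertwining the two inductions, is what your proposal is missing.
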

To say that a finite real sequence $a_1\geq a_2\geq \cdots\geq a_M$ 
majorizes another real sequence $b_1\geq b_2\geq \cdots\geq b_M$, written 
$(a_1,\ldots,a_M)\succ (b_1,\ldots, b_M)$ means that
\begin{align}
a_1\geq& b_1\nonumber\\a_1+a_2\geq& b_1+b_2\nonumber\\\vdots&\nonumber\\ a_1+\ldots+a_{M-1}\geq&
b_1+\ldots+b_{M-1}\nonumber\\ a_1+\ldots+a_{M}=&
b_1+\ldots+b_{M}.\label{eq:majorization}
\end{align}
Note the equality in the last condition
(\ref{eq:majorization}). 

It is a fact
that $(a_1,\ldots,a_M)\succ (b_1,\ldots,b_M)$ if and only if 
$$
\sum_{j=1}^Mf(a_j)\leq \sum_{j=1}^Mf(b_j)
$$
for all concave functions $f:\R\to\R$. This is often called Karamata's
Theorem,\footnote{In \cite{LS} the concave function is assumed to be
  monotone increasing. With the assumption of equality in
  (\ref{eq:majorization}) the assumption of monotonicity is not
  required.} cf.\ \cite{LS} Remark 4.7 after eq.\ (4.5.4). It is, in fact, enough that the
concave function is defined on the interval $[a_M,a_1]$.

If $A$ and $B$ are two Hermitian matrices of the same size we write $A\succ B $ if the
eigenvalue sequence of $A$ majorizes the eigenvalue sequence of
$B$. The notion of majorization of sequences can be easily generalized
to infinite summable sequences and trace class operators but we will
not need this here. 

It is an easy exercise, using the variational principle, to prove that if
$A,B,C$ are Hermitian matrices such that $A\succ B$ and $A\succ C$ then 
\begin{equation}\label{eq:variational}
A\succ \lambda B+(1-\lambda) C,
\end{equation}
for all $0\leq\lambda\leq 1$. 

As a consequence it follows from Theorem~\ref{thm:Phimajor} that 
the minimal output (von Neumann) entropy of the channel $\Phi^k$
is achieved for a coherent state, i.e.,
$$
\min_\rho S(\Phi^k(\rho))=
S(\Phi^k(|\omega\rangle\langle\omega|)).
$$ 
More generally, the output of
coherent states minmize the trace of concave functions. 
\begin{cl}[Minimization of the trace of concave functions]\label{cl:main}
If $f:[0,1]\to \R$ is concave and $\rho$ is a density matrix on
$\cH_J$ and $k=2(K-J)$ then 
$$
\Tr_K[f(\Phi^k(\rho))]\geq \Tr_K[f(\Phi^k(|\omega\rangle\langle\omega|))]
$$
for all $\omega\in\rS^2$.
\end{cl}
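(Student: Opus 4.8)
The plan is to deduce the corollary directly from Theorem~\ref{thm:Phimajor} together with the characterization of majorization by concave functions (Karamata's Theorem) recorded just above the statement. The only auxiliary facts I need are that $\Phi^k$ is a quantum channel and that, by the spectral theorem, the trace of a concave function of a Hermitian matrix equals the sum of that function over the spectrum, $\Tr_K[f(A)]=\sum_j f(\lambda_j(A))$.

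First I would note that, since $\Phi^k$ is trace preserving and completely positive, both $\Phi^k(\rho)$ and $\Phi^k(|\omega\rangle\langle\omega|)$ are density matrices on $\cH_K=\C^{2K+1}$. Hence each has exactly $M=2K+1$ eigenvalues, all lying in $[0,1]$ and summing to $1$. In particular the two eigenvalue sequences have equal length and equal total, so the equality condition in the definition of majorization is automatic, and Karamata's Theorem applies on the interval $[0,1]$, which is exactly where $f$ is assumed concave (recall it suffices that $f$ be defined on $[a_M,a_1]\subseteq[0,1]$). Writing $a_1\geq\cdots\geq a_M$ for the eigenvalues of $\Phi^k(|\omega\rangle\langle\omega|)$ and $b_1\geq\cdots\geq b_M$ for those of $\Phi^k(\rho)$, Theorem~\ref{thm:Phimajor} gives $(a_1,\ldots,a_M)\succ(b_1,\ldots,b_M)$, so Karamata's Theorem yields
$$
\sum_{j=1}^M f(a_j)\leq \sum_{j=1}^M f(b_j).
$$
Identifying each side with a trace via the spectral theorem, the left side is $\Tr_K[f(\Phi^k(|\omega\rangle\langle\omega|))]$ and the right side is $\Tr_K[f(\Phi^k(\rho))]$, which is precisely the asserted inequality. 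The independence of the right-hand expression on $\omega$ is the $SU(2)$-invariance already recorded in Theorem~\ref{thm:Phimajor}.

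There is no substantive obstacle once Theorem~\ref{thm:Phimajor} is in hand; the single point requiring care is the \emph{direction} of the inequality in Karamata's Theorem. Because $f$ is concave, it is the majorizing sequence---the coherent-state output---that produces the \emph{smaller} value of $\sum_j f(\cdot)$, which is what correctly places $\Phi^k(|\omega\rangle\langle\omega|)$ on the lower side and $\Phi^k(\rho)$ on the upper side of the stated inequality.
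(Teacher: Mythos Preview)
Your proof is correct and is exactly the argument the paper intends: the corollary is stated immediately after Karamata's Theorem precisely because it follows at once from Theorem~\ref{thm:Phimajor} via that characterization of majorization, and the paper gives no separate proof. Your attention to the direction of the inequality and to the fact that both eigenvalue sequences lie in $[0,1]$ with equal sums is well placed.
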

Of course, the inequalities about concave functions are reversed for
convex functions. 
\section{A formula for $P_-$}\label{sec:formula}
Our goal here is to find an explicit formula for the projection $P_-$
that projects $\cH_J\otimes\cH_K$ onto the subspace $\cH_{K-J}$,
under the assumption that $K\geq J$.

We start by choosing the standard preferred basis 
$$
|M\srangle{L},\quad  M=-L,\ldots,L,
$$
in $\cH_L$, in which $S_z$ is diagonal and $S_x$ is real. This specifies the 
basis up to an over-all phase.
We introduce the {\it anti}-unitary map $U_L:\cH_L\to\cH_L$
given by 
$$
U_L\sum_{M=-L}^L\alpha_M|M\srangle{L}=
\sum_{M=-L}^L(-1)^{L-M}\overline{\alpha_M}|-M\srangle{L}.
$$
This map has the property that 
\begin{equation}\label{eq:U_Laction}
U_L^{-1}{\bf S}_LU_L=-{\bf S}_L
\end{equation}
It is the unitary $e^{i\pi S_y}$ followed by complex conjugation in
the preferred basis.  Any anti-unitary satisfying (\ref{eq:U_Laction})
agrees with $U_L$ up to an over-all phase. Note that $U_L^{-1}=
(-1)^{2L}U_L$ and hence 
\begin{equation}\label{eq:UL}
  \langle U_L\phi|\psi\rangle=\langle U_L\psi|U_LU_L\phi\rangle
  =(-1)^{2L}\langle U_L\psi|\phi\rangle,
\end{equation}
for all $\phi,\psi\in\cH_L$. 

If $K\geq J$ then $\cH_K\subseteq \cH_{K-J}\otimes\cH_J$ 
(recall that $\cH_{K-J}\subseteq\cH_J\otimes\cH_K$ as we said in the beginning). 
We thus have a
sesqui-linear map
$$
\cH_J\times\cH_K\ni(\psi,\phi)\mapsto\slangle{J}\psi\|\phi\srangle{K}
\in\cH_{K-J},
$$
where the {\it partial} inner product
$\slangle{J}\psi\|\phi\srangle{K}$ is defined by the inner product in
$\cH_{K-J}$ as follows
$$
\slangle{K-J}\eta |
  \slangle{J}\psi\|\phi\srangle{K}\srangle{K-J}
=\slangle{(K-J)\otimes J}\eta\otimes \psi|\phi\srangle{(K-J)\otimes J}
$$
for all $\eta\in\cH_{K-J}$, where the last inner product is in
$\cH_{K-J}\otimes\cH_J$.
\begin{thm}[Formula for $P_-$]
If $K\geq J$ then we have for all $\psi\in\cH_J$ and $\phi\in\cH_K$ 
\begin{equation}\label{eq:P-formula}
P_-(\psi\otimes\phi)=\mu \
\slangle{J} U_J\psi\|\phi\srangle{K},
\end{equation}
where $\mu\in\C$ satisfies 
$$
|\mu|^2=\frac{2(K-J)+1}{2K+1}.
$$ 
\end{thm}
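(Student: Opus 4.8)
The plan is to recognize the right-hand side of (\ref{eq:P-formula}) as an $SU(2)$-equivariant map and then invoke Schur's lemma together with a single normalization computation. Write $\iota:\cH_{K-J}\to\cH_J\otimes\cH_K$ for the isometric embedding onto the minimal-spin subspace (fixed only up to a phase), so that $P_-=\iota\iota^*$, and define $T_0(\psi\otimes\phi)=\slangle{J}U_J\psi\|\phi\srangle{K}\in\cH_{K-J}$. Since the partial inner product is anti-linear in its left entry and $U_J$ is anti-linear, $T_0$ is \emph{linear} in $\psi$; being linear in $\phi$ as well, it extends to a linear map $\cH_J\otimes\cH_K\to\cH_{K-J}$. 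Under the identification of $\cH_{K-J}$ with its image under $\iota$, the asserted formula is precisely $\iota^*=\mu\,T_0$, so it suffices to show that $\iota^*$ and $T_0$ are proportional and that the constant has modulus $\sqrt{(2(K-J)+1)/(2K+1)}$.

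First I would verify that $T_0$ is equivariant. The defining relation (\ref{eq:U_Laction}), together with the anti-linearity of $U_J$, gives $U_J e^{-i\theta\,{\bf n}\cdot{\bf S}_J}=e^{-i\theta\,{\bf n}\cdot{\bf S}_J}U_J$; that is, $U_J$ commutes with the $SU(2)$ action on $\cH_J$. Writing $g$ for the action of $SU(2)$ on each space, the embedding $\cH_K\subseteq\cH_{K-J}\otimes\cH_J$ used to define the partial inner product is equivariant, and a short computation from the defining identity of $\slangle{J}\,\cdot\,\|\,\cdot\,\srangle{K}$ shows $\slangle{J}g\chi\|g\phi\srangle{K}=g\,\slangle{J}\chi\|\phi\srangle{K}$. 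Combining these gives $T_0\circ(g\otimes g)=g\circ T_0$, so $T_0$ is an $SU(2)$-equivariant map into the irreducible $\cH_{K-J}$. Because $\cH_{K-J}$ occurs with multiplicity one in the Clebsch--Gordan decomposition of $\cH_J\otimes\cH_K$, the space of such maps is one-dimensional, whence $\iota^*=\mu\,T_0$ for some scalar $\mu$; its phase is undetermined (reflecting the phase ambiguity of the two embeddings) but its modulus is fixed by requiring $\iota$ to be an isometry.

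To pin down $|\mu|$ I would impose $\iota^*\iota=I_{\cH_{K-J}}$. Since $\iota=\bar\mu\,T_0^*$, this reads $|\mu|^2\,T_0T_0^*=I_{\cH_{K-J}}$; by Schur $T_0T_0^*$ is a positive scalar, so taking traces gives $|\mu|^2\,\|T_0\|_{\rm HS}^2=2(K-J)+1$. I would then compute the Hilbert--Schmidt norm directly: expanding $\|\slangle{J}U_Je_i\|f_j\srangle{K}\|^2$ via the defining identity and summing over orthonormal bases $\{e_i\}$ of $\cH_J$, $\{f_j\}$ of $\cH_K$, and $\{g_k\}$ of $\cH_{K-J}$ produces $\sum_{i,j,k}|\slangle{(K-J)\otimes J}g_k\otimes U_Je_i|f_j\srangle{(K-J)\otimes J}|^2$, where each $f_j$ is regarded in $\cH_{K-J}\otimes\cH_J$. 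Because $U_J$ is anti-unitary, $\{g_k\otimes U_Je_i\}_{i,k}$ is a complete orthonormal basis of $\cH_{K-J}\otimes\cH_J$, so for fixed $j$ the inner sum collapses to $\|f_j\|^2=1$, giving $\|T_0\|_{\rm HS}^2=\dim\cH_K=2K+1$. Hence $|\mu|^2=(2(K-J)+1)/(2K+1)$, as claimed.

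The main obstacle is the equivariance step: one must track how $U_J$ threads through both embeddings at once and confirm that the sign flip in (\ref{eq:U_Laction}) is exactly what makes $U_J$ \emph{commute} with the representation rather than intertwine it with a conjugate, so that $T_0$ lands equivariantly in $\cH_{K-J}$ itself and not in a dual copy. Once equivariance is in hand, Schur's lemma and the completeness argument for the Hilbert--Schmidt norm render the rest routine.
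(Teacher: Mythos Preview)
Your proof is correct and takes a genuinely different route from the paper's. The paper proceeds concretely: it tests the identity on pairs of coherent states $|\omega'\srangle{J}\otimes|\omega\srangle{K}$, uses the fact that $|U_J\omega\srangle{J}\otimes|\omega\srangle{K}$ is already the highest-weight vector in the $\omega$-direction for the $\cH_{K-J}$ summand to identify $P_-$ on that special pair, and then rebuilds the general case via bilinearity; $|\mu|^2$ is read off from the trace-preserving property of $\Phi^{-k}$. You instead recognize $T_0$ as an $SU(2)$-intertwiner into an irreducible summand of multiplicity one and invoke Schur, then fix $|\mu|$ by a clean Hilbert--Schmidt/Parseval computation. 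Your argument is more structural and would transplant readily to other compact groups (as the paper hints it expects in the $SU(N)$ setting), whereas the paper's weight-space argument keeps the coherent-state machinery in the foreground, which dovetails with the bosonic second-quantization viewpoint used later. The one place that needs care in your write-up---and you flag it yourself---is the verification that the anti-linearity of $U_J$ combines with the sign in (\ref{eq:U_Laction}) to give genuine commutation $U_JR(g)=R(g)U_J$ rather than an intertwining with the contragredient; this is indeed a one-line power-series check, but it deserves to be written out explicitly since the whole argument rests on it.
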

\begin{proof} Formula (\ref{eq:P-formula}) above for $P_-(\psi\otimes\phi)$ is clearly a
  bilinear map in $\psi$ and $\phi$. It is thus enough to prove the
  above formula for a linear spanning set for $\phi$ and $\psi$. Such
  spanning sets are provided by the coherent states $|\omega\srangle{J}$
  and $|\omega\srangle{K}$ for $\omega\in\rS^2$.
  It is thus enough to prove the formula for $\psi=|\omega'\srangle{J}$ and 
  $\phi=|\omega\srangle{K}$. For simplicity we will write $U_J|\omega'\srangle{J}
  =|U_J\omega'\srangle{J}$, where $U_J$ is the anti-unitary defined above.

  In the following we let ${\bf S}_J$ and ${\bf S}_K$, respectively,
  denote the spin operators on $\cH_J$ and $\cH_K$, respectively, and
  we let ${\bf S}$ be the total spin operator on $\cH_J\otimes\cH_K$,
  i.e., ${\bf S}={\bf S}_J\otimes I_K+I_J\otimes{\bf S}_K$. Since
  $\omega\cdot{\bf S}_K|\omega\srangle{K}=K|\omega\srangle{K}$, and
  since $\eta=|\omega'\srangle{J} -|U_J\omega\srangle{J}\slangle{J}
  U_J\omega|\omega'\srangle{J}$ has no component in the subspace
  $\omega\cdot{\bf S}_J=-J$, it is clear that
  $P_-\eta\otimes|\omega\srangle{K}=0$, for otherwise the total
  $\omega\cdot{\bf S}$ component of this vector would be bigger than
  the maximal possible namely $K-J$. Hence
  \begin{equation}\label{eq:P_1}
    P_-|\omega'\srangle{J}\otimes|\omega\srangle{K}=
    \slangle{J} U_J\omega|\omega'\srangle{J}P_-
    |U_J\omega\srangle{J}\otimes|\omega\srangle{K}.
  \end{equation}
  We have $\omega\cdot{\bf S}|U_J\omega\srangle{J}\otimes|\omega\srangle{K}
  =(K-J)|U_J\omega\srangle{J}\otimes|\omega\srangle{K}$ and thus, since 
  $P_-$ commutes with $\omega\cdot{\bf S}$,
  \begin{equation}\label{eq:P_2}
    P_- |U_J\omega\srangle{J}\otimes|\omega\srangle{K}= \mu'\
    |\omega\srangle{K-J}=\mu'\ \slangle{J}
    \omega\|\omega\srangle{K}
  \end{equation}
  for some complex scalar $\mu'$. Here we have used that $\slangle{J}
  \omega\|\omega\srangle{K}=|\omega\srangle{K-J}$ in
  $\cH_{K-J}\otimes\cH_J$, since
  $|\omega\srangle{K}=|\omega\srangle{K-J}\otimes|\omega\srangle{J}$. 
  Inserting (\ref{eq:P_2}) into (\ref{eq:P_1}) 
  we obtain
  $$
  P_-|\omega'\srangle{J}\otimes|\omega\srangle{K}=\mu'\
    \slangle{J} U_J\omega|\omega'\srangle{J}
    \slangle{J}\omega\|\omega\srangle{K}.
  $$
  Since $U_J^2=(-1)^{2J}$ and $U_J$ is anti-unitary
  we have $U_J(|\omega'\srangle{J}-\eta)=(-1)^{2J}
  \slangle{J}\omega'|U_J\omega\srangle{J}|\omega\srangle{J}$.
  Moreover, $U_J\eta$ has no component in the space $\omega\cdot{\bf
    S}_J=J$, hence $\slangle{J}U_J\eta\|\omega\srangle{K}=0$ and thus 
  $$
  P_-|\omega'\srangle{J}\otimes|\omega\srangle{K}=\mu'\ \slangle{J}
  U_J\omega|\omega'\srangle{J}\slangle{J}\omega\|\omega\srangle{K}=\mu\
  \slangle{J}U_J\omega'\|\omega\srangle{K},
  $$
  with $\mu=(-1)^{2J}\mu'$, which is what we wanted to prove. 
  
  We can find the modulus of $\mu$ from the fact that $\Phi^{-k}$,
  with $k=2(K-J)$ is trace preserving
  \begin{align*}
    |\mu|^2=&(\slangle{J}
    U_J\omega|\otimes\slangle{K}\omega|)P_-
    (|U_J\omega\srangle{J}\otimes|\omega\srangle{K})=\Tr_J(\Tr_K(P_-|\omega\srangle{K}\slangle{K}\omega|))
    \\=&
    \frac{2(K-J)+1}{2K+1}\Tr_J\Phi^{-k}(|\omega\srangle{K}\slangle{K}\omega|)= \frac{2(K-J)+1}{2K+1}.
  \end{align*}
\end{proof}

If $k=2(K-J)\geq0$ we therefore have
\begin{equation}
\slangle{K}\phi|\Phi^k(|\psi\srangle{J}\slangle{J}\psi|)|\phi\srangle{K}=
\frac{2J+1}{2K+1}\|\slangle{J} U_J\psi\|\phi\srangle{K}\|_{K-J}^2.
\end{equation}
If we introduce the channel 
\begin{equation}\label{eq:tildePhi}
  \widetilde\Phi^k(\rho)=\Phi^k(U^{\rlap{\phantom{-1}}}_J\rho
U_J^{-1})
\end{equation}
we see that
$$
\slangle{K}\phi|\widetilde\Phi^k(|\psi\srangle{J}\slangle{J}\psi|)|\phi\srangle{K}=
\frac{2J+1}{2K+1}\|\slangle{J}\psi\|\phi\srangle{K}\|_{K-J}^2
$$
or equivalently
\begin{equation}
  \widetilde\Phi^k(\rho)=\frac{2J+1}{2K+1}P_K(I_{K-J}\otimes
  \rho)P_K,\label{eq:tildeE-}
\end{equation}
where $P_K$ is the projection onto $\cH_K$ in $\cH_{K-J}\otimes\cH_J$.
In particular, $\widetilde\Phi^0$ is the identity map. 

If $K\leq J$ the corresponding result is that 
\begin{equation}
P_-\psi\otimes\phi= \mu' \
\slangle{K} U_K\phi\|\psi\srangle{J},\quad
|\mu'|^2=\frac{2(J-K)+1}{2J+1}.
\end{equation}
In particular, in this case
\begin{equation}
\slangle{K}\phi|\Phi^{-|k|}(|\psi\srangle{J}\slangle{J}\psi|)|\phi\srangle{K}=
\|\slangle{K} U_K\phi\|\psi\srangle{J}\|_{J-K}^2.
\end{equation}
Hence if $K\leq J$ and we now set $\widetilde\Phi^{-|k|}(\rho)=U_K^{-1}\Phi^k(\rho)
U_K$ we obtain
$$
\slangle{K}\phi|\widetilde\Phi^{-|k|}(|\psi\srangle{J}\slangle{J}\psi|)|\phi\srangle{K}=
\|\slangle{K}\phi\|\psi\srangle{J}\|_{J-K}^2.
$$

\section{The Bosonic formulation}\label{sec:bose}
The space $\cH_J$ may be identified with the completely symmetric
subspace $\bigotimes_{\text{sym}}^{2J}\cH_{1/2}$ of the tensor product $\bigotimes^{2J}\cH_{1/2}$. 

A particularly simple way to see this is to use the Schwinger 
representation of spin operators in terms of creation and annihilation
operators. Let $\cH_{1/2}$ be the one-particle space and let
$a_{\uparrow}^*$ and $a_{\downarrow}^*$ be the creation operators
corresponding to spin up and down respectively. They are the operators
which, for all positive integers $\ell$, map 
$\bigotimes_{\text{sym}}^{\ell}\cH_{1/2}$ to 
$\bigotimes_{\text{sym}}^{\ell+1}\cH_{1/2}$, such that for $\psi\in\bigotimes_{\text{sym}}^{\ell}\cH_{1/2}$
$$
a_{\uparrow}^*\psi=\sqrt{\ell+1}P_{\text{sym}}(|\uparrow\srangle{\frac12}\otimes \psi),
$$
and likewise for $a_{\downarrow}^*$, where $P_{\text{sym}}$ is the projection
onto the symmetric space $\bigotimes_{\text{sym}}^{\ell+1}\cH_{1/2}$. The
annihilation operators $a_{\uparrow}$, $a_{\downarrow}$ are the
adjoints of $a_{\uparrow}^*$ and $a_{\downarrow}^*$.

The symmetric subspace of $\bigotimes^{2J}\cH_{1/2}$ is the
subspace corresponding to $2J$ bosonic particles, i.e., the subspace
$$
a^*_\uparrow a_\uparrow+a^*_\downarrow a_\downarrow=2J.
$$
On this $2J+1$-dimensional subspace we observe that the operators 
\begin{eqnarray*}
S_x&=&\frac12(a^*_\uparrow a_\downarrow+a^*_\downarrow a_\uparrow)\\
S_y&=&\frac1{2i}(a^*_\uparrow a_\downarrow-a^*_\downarrow a_\uparrow)\\
S_z&=&\frac12(a^*_\uparrow a_\uparrow-a^*_\downarrow a_\downarrow)
\end{eqnarray*}
satisfy the correct commutation relations and
$S_x^2+S_y^2+S_z^2=J(J+1)$. 

The spin representation on $\cH_J$ may then be identified with the
space of $2J$ bosons over a $2$-dimensional one-particle space. 
In particular, the coherent state $|\omega\srangle{J}\in\cH_J$
is in the bosonic language the pure condensate wave function
$((2J)!)^{-1/2}(a_\omega^*)^{2J}|0\rangle$, where $|0\rangle$ is the
vacuum state (i.e., the state of zero particles) and 
$a_\omega^*$ is the creation of a particle in the state
$|\omega\rangle_{\frac12}$, i.e.,
$a_\omega^*=\slangle{\frac12}\uparrow|\omega\srangle{\frac12}a^*_\uparrow+
\slangle{\frac12}\downarrow|\omega\srangle{\frac12}a^*_\downarrow$.
We will use the canonical commutation relations that all creation
operators commute and 
$[a_{\omega'},a_\omega^*]=\slangle{\frac12}\omega'|\omega\srangle{\frac12}$.
This gives, in particular,
\begin{equation}\label{eq:commutation}
a_\uparrow a^*_\uparrow +a_\downarrow a^*_\downarrow=a^*_\uparrow
a_\uparrow+a^*_\downarrow a_\downarrow+2=2J+2
\end{equation}
on $\cH_J$.

The channel $\widetilde\Phi^k$ has a simple form using creation and
annihilation operators.
\begin{lm}[The channel $\widetilde\Phi^k$ in second quantization]\label{lm:Phi2nd}
If $\rho$ is a density matrix on
$\cH_J=\bigotimes_{\text{sym}}^{2J}\cH_{1/2}$ and $K=J+\frac{k}{2}$ for some
integer $k\geq0$ then
\begin{equation}\label{eq:kraus1}
\widetilde\Phi^k(\rho)=\frac{(2J+1)!}{(2K+1)!}\sum_{i_1,\ldots, i_{k}=\uparrow,\downarrow}a_{i_k}^*\cdots
a_{i_1}^*\rho a_{i_1}\cdots a_{i_k}.
\end{equation}
If $K=J-\frac{k}{2}$ with $k\geq0$ we have 
\begin{equation}\label{eq:kraus2}
\widetilde\Phi^{-k}(\rho)=\frac{(2K)!}{(2J)!}\sum_{i_1,\ldots, i_{k}=\uparrow,\downarrow}a_{i_k}\cdots
a_{i_1}\rho a_{i_1}^*\cdots a_{i_k}^*.
\end{equation}
\end{lm}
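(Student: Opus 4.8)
The plan is to verify the operator identity (\ref{eq:kraus1}) by evaluating both sides on rank-one inputs. Both sides are linear in $\rho$, so by polarization it suffices to treat $\rho=|\psi\rangle\langle\psi|$ with $\psi\in\cH_J=\bigotimes_{\text{sym}}^{2J}\cH_{1/2}$. The first step is to rewrite the Kraus operators geometrically. Iterating the defining relation $a_i^*\eta=\sqrt{\ell+1}\,P_{\text{sym}}(|i\srangle{1/2}\otimes\eta)$ and using that full symmetrization absorbs partial symmetrization, $P_{\text{sym}}^{(n+1)}(I\otimes P_{\text{sym}}^{(n)})=P_{\text{sym}}^{(n+1)}$, one gets, with $K=J+k/2$ and $\mathbf i=(i_1,\ldots,i_k)$,
\[
a_{i_k}^*\cdots a_{i_1}^*\psi=\sqrt{\frac{(2K)!}{(2J)!}}\;P_{\text{sym}}^{(2K)}\bigl(|i_k\rangle\otimes\cdots\otimes|i_1\rangle\otimes\psi\bigr).
\]
Writing $B|\xi\rangle:=P_{\text{sym}}^{(2K)}(|\xi\rangle\otimes\psi)$ for $\xi\in\bigotimes^k\cH_{1/2}$, this reads $a_{i_k}^*\cdots a_{i_1}^*\psi=\sqrt{(2K)!/(2J)!}\,B|\mathbf i\rangle$.

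The second step is the comparison with the projection formula (\ref{eq:tildeE-}). Since $K-J=k/2$, the space $\cH_{K-J}=\bigotimes^k_{\text{sym}}\cH_{1/2}$, the projection $P_K$ onto $\cH_K\subseteq\cH_{K-J}\otimes\cH_J$ is exactly $P_{\text{sym}}^{(2K)}$ restricted to that subspace, and expanding $I_{K-J}=Q:=\sum_\alpha|\chi_\alpha\rangle\langle\chi_\alpha|$ in an orthonormal basis of the symmetric $k$-particle space gives $\widetilde\Phi^k(|\psi\rangle\langle\psi|)=\frac{2J+1}{2K+1}\,BQB^{*}$. On the other side, summing the Kraus terms over the \emph{full} product basis $\{|\mathbf i\rangle\}$ of $\bigotimes^k\cH_{1/2}$ and using $\sum_{\mathbf i}|\mathbf i\rangle\langle\mathbf i|=I$ yields $\sum_{\mathbf i}a_{i_k}^*\cdots a_{i_1}^*\rho\,a_{i_1}\cdots a_{i_k}=\frac{(2K)!}{(2J)!}\,BB^{*}$. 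The crucial point is that $B$ kills the non-symmetric part of the first $k$ slots, i.e.\ $B=BQ$ (again by $P_{\text{sym}}^{(2K)}=P_{\text{sym}}^{(2K)}(P_{\text{sym}}^{(k)}\otimes I)$), whence $BB^{*}=BQB^{*}$. Matching the prefactors via $\frac{(2J+1)!}{(2K+1)!}\cdot\frac{(2K)!}{(2J)!}=\frac{2J+1}{2K+1}$ then gives (\ref{eq:kraus1}).

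I expect the main obstacle to be the bookkeeping in the first two steps: correctly reducing the iterated symmetrized creation to a single full symmetrization $B$, tracking the factorial normalization $\sqrt{(2K)!/(2J)!}$, and---most importantly---the identity $B=BQ$, which is exactly what lets the Kraus sum over the $2^{k}$-dimensional full product basis collapse onto the $(k+1)$-dimensional symmetric projection $Q=I_{K-J}$ appearing in (\ref{eq:tildeE-}). Everything else is linear algebra of orthogonal projections.

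Finally, (\ref{eq:kraus2}) is handled analogously: for $K=J-k/2$ the adjoint of the defining relation shows that $a_i$ acts on an $(\ell+1)$-particle state as $\sqrt{\ell+1}$ times the partial contraction against $|i\srangle{1/2}$ in one factor, so $a_{i_k}\cdots a_{i_1}$ equals $\sqrt{(2J)!/(2K)!}$ times a $k$-fold contraction, and the same completeness-plus-symmetrization argument applies with the roles of creation and contraction interchanged. Alternatively, one may observe that the two Kraus maps in (\ref{eq:kraus1}) and (\ref{eq:kraus2}) are Hilbert--Schmidt adjoints of one another and that $\widetilde\Phi^{k}$ and $\widetilde\Phi^{-k}$ are related by the corresponding adjunction (with the $U_J,U_K$ conjugations matching up), so that (\ref{eq:kraus2}) follows from (\ref{eq:kraus1}) by duality.
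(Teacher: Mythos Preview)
Your proof is correct and follows essentially the same route as the paper: both start from formula~(\ref{eq:tildeE-}), identify $P_K$ with the full symmetrizer $P_{\text{sym}}^{(2K)}$, and convert $P_K(|i_k\rangle\otimes\cdots\otimes|i_1\rangle\otimes\psi)$ into $\sqrt{(2J)!/(2K)!}\,a_{i_k}^*\cdots a_{i_1}^*\psi$. You spell out explicitly the identity $P_{\text{sym}}^{(2K)}=P_{\text{sym}}^{(2K)}(P_{\text{sym}}^{(k)}\otimes I)$ (your $B=BQ$) that lets one replace $I_{K-J}$ by the sum over the full product basis $\{|\mathbf i\rangle\}$; the paper uses this silently when it writes $I_{K-J}$ as $\sum_{i_1,\ldots,i_k}|i_k\rangle\cdots|i_1\rangle\langle i_1|\cdots\langle i_k|$ sandwiched between two copies of $P_K$.
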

\begin{proof} We use the expression (\ref{eq:tildeE-}) for
  $\widetilde\Phi^k$. The space $\cH_K=\bigotimes_{\text{sym}}^{2K}\cH_{1/2}$ is
  the totally symmetric subspace of 
  $\left(\bigotimes_{\text{sym}}^{2(K-J)}\cH_{1/2}\right)\otimes\left(\bigotimes_{\text{sym}}^{2J}\cH_{1/2}\right)$.
  Thus by the definition of the creation and annihilation
  operators\footnote{The meaning of the operator
    $|\psi\rangle\otimes\rho\otimes\langle\psi|$ in
    (\ref{eq:tensordef}) is clear if $\rho$ is a rank one
    projection $|\phi\rangle\langle\phi|$ and it is defined in general by linearity}
  \begin{align}
  \widetilde\Phi^k(\rho)=&\frac{2J+1}{2K+1}\sum_{i_1,\ldots
    i_{k}=\uparrow,\downarrow}P_K|i_k\srangle{\frac12}\otimes\cdots\otimes|i_1\srangle{\frac12}\otimes\rho
  \otimes\slangle{\frac12} i_1|\otimes\cdots\otimes\slangle{\frac12} i_k|P_K\label{eq:tensordef}\\
  =&\frac{2J+1}{2K+1}\frac{(2J)!}{(2K)!}\sum_{i_1,\ldots
    i_{k}=\uparrow,\downarrow}a_{i_k}^*\cdots a_{i_1}^*\rho a_{i_1}\cdots a_{i_k}.
\end{align}
The case $K<J$ follows in the same way. 
\end{proof}
\begin{remark} Note that (\ref{eq:kraus1}) and (\ref{eq:kraus2}) are the Kraus representations of
  the completely positive trace preserving maps $\widetilde\Phi^k$.
\end{remark}
We shall use this to calculate the action of $\widetilde\Phi^k$ on coherent
states. 
\begin{lm}[The action of $\widetilde\Phi^k$ on coherent
  states]\label{lm:coherentoutput}
If $K=J+\frac{k}{2}$ for some
integer $k\geq0$ then
$\widetilde\Phi^k(|\uparrow\srangle{J}\slangle{J}\uparrow|)$
has the orthonormalized eigenfunctions
$$
\phi_j^{C,k}=(j!)^{-1/2}((2J+k-j)!)^{-1/2}(a^*_\downarrow)^{j}(a^*_\uparrow)^{2J+k-j}|0\rangle,\quad j=0,\ldots,2J+k=2K
$$
with corresponding eigenvalues 
$$
\lambda_j^{C,k}=\frac{2J+1}{2J+k+1}\frac{k!(2J+k-j)!}{(2J+k)!(k-j)!}
=\frac{2J+1}{2K+1}\frac{(2(K-J))!(2K-j)!}{(2K)!((2(K-J)-j)!},
$$
for $j=0,\ldots,k$ and zero for $j>k$. 
Note that the eigenvalues are listed in decreasing order. 
\end{lm}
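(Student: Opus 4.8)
The plan is to feed the bosonic form of the coherent state directly into the Kraus representation (\ref{eq:kraus1}) and to exploit the fact that all creation operators commute, so that of the $2^k$ terms in the sum only the \emph{number} of down-spin labels matters.

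First I would put the input in occupation-number form. Since $|\uparrow\srangle{J}=((2J)!)^{-1/2}(a^*_\uparrow)^{2J}|0\rangle$, applying a string of $k$ creation operators gives
$$
a^*_{i_k}\cdots a^*_{i_1}|\uparrow\srangle{J}=((2J)!)^{-1/2}(a^*_\downarrow)^{j}(a^*_\uparrow)^{2J+k-j}|0\rangle,
$$
where $j$ is the number of indices among $i_1,\ldots,i_k$ equal to $\downarrow$; this depends only on $j$ precisely because the creation operators commute. Using the standard normalization $\|(a^*_\downarrow)^{j}(a^*_\uparrow)^{m}|0\rangle\|^2=j!\,m!$ (valid since $[a_\uparrow,a^*_\uparrow]=[a_\downarrow,a^*_\downarrow]=1$ and $[a_\uparrow,a^*_\downarrow]=0$), this vector is a positive multiple of the stated $\phi^{C,k}_j$. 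The $\phi^{C,k}_j$ for $j=0,\ldots,2K$ carry pairwise distinct occupation numbers $(2K-j,\,j)$ of up and down quanta, hence are automatically orthonormal and form a basis of $\cH_K$.

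Next I would collect terms. In (\ref{eq:kraus1}) every multi-index $(i_1,\ldots,i_k)$ with exactly $j$ down-labels yields the \emph{identical} contribution, proportional to $|\phi^{C,k}_j\rangle\langle\phi^{C,k}_j|$, and there are $\binom{k}{j}$ of them. Therefore $\widetilde\Phi^k(|\uparrow\srangle{J}\slangle{J}\uparrow|)$ is already diagonal in the basis $\{\phi^{C,k}_j\}$, with
$$
\lambda^{C,k}_j=\binom{k}{j}\,\frac{(2J+1)!}{(2K+1)!}\,\frac{j!\,(2J+k-j)!}{(2J)!}\qquad(0\le j\le k),
$$
and eigenvalue $0$ for $k<j\le 2K$, simply because no multi-index can have more than $k$ down-labels. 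A short factorial simplification—cancelling the $j!$ against $\binom{k}{j}$ and using $(2J+1)!/(2J)!=2J+1$ together with $2K+1=2J+k+1$—turns this into the asserted expression.

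The only point remaining is the ordering claim. Computing the ratio $\lambda^{C,k}_{j+1}/\lambda^{C,k}_j=(k-j)/(2J+k-j)\le 1$ shows the eigenvalues are nonincreasing in $j$, so the listed order is indeed decreasing; trace preservation of $\widetilde\Phi^k$, already established, guarantees $\sum_j\lambda^{C,k}_j=1$ and gives a free consistency check. I do not expect a genuine obstacle here: the single conceptual step is recognizing that commutativity of the creation operators collapses the full Kraus sum to one sum over the down-label count $j$ weighted by $\binom{k}{j}$, after which everything is bookkeeping of normalization constants.
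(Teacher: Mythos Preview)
Your proof is correct and follows essentially the same route as the paper: both substitute the bosonic form of $|\uparrow\srangle{J}$ into the Kraus representation (\ref{eq:kraus1}), use commutativity of the creation operators to collapse the $2^k$ terms into a sum over the down-label count weighted by $\binom{k}{j}$, and read off the eigenvalues after normalization. You supply a bit more detail (orthonormality of the $\phi^{C,k}_j$ and the monotonicity check via the ratio $(k-j)/(2J+k-j)$), but the argument is the same.
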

\begin{proof}
This follows immediately from Lemma~\ref{lm:Phi2nd} since
\begin{align*}
\widetilde\Phi^k(|\uparrow\srangle{J}\slangle{J}\uparrow|)=&\frac{(2J+1)!}{(2J+k+1)!}\sum_{j=0}^k
{k\choose
  j}((2J)!)^{-1}(a^*_\downarrow)^{j}(a^*_\uparrow)^{2J+k-j}|0\rangle\langle
0|a_\uparrow^{2J+k-j}a_\downarrow^{j}.
\end{align*}
\end{proof}
An important ingredient in the proof of our main
Theorem~\ref{thm:Phimajor} below is to study the operator
$$
\Gamma^{C,k+1}_m=\sum_{j=0}^m a_\uparrow
|\phi_j^{C,k+1}\rangle\langle\phi_j^{C,k+1}|a_{\uparrow}^*+a_\downarrow
|\phi_j^{C,k+1}\rangle\langle\phi_j^{C,k+1}|a_\downarrow^*
$$
for $m\leq k$. Using the fact that
$a_\downarrow\phi_j^{C,k+1}=\sqrt{j}\,\phi_{j-1}^{C,k}$ for $j\geq1$ and
$a_\downarrow\phi_0^{C,k+1}=0$ and 
$a_\uparrow\phi_j^{C,k+1}=\sqrt{2J+k+1-j}\,\phi_{j}^{C,k}$ for
$j=0,\ldots,m$ we find that 
\begin{equation}\label{eq:Gammacoherent}
\Gamma^{C,k+1}_m=\sum_{j=0}^{m-1}
(2J+k+2)|\phi_j^{C,k}\rangle\langle\phi_j^{C,k}|
+(2J+k+1-m)|\phi_m^{C,k}\rangle\langle\phi_m^{C,k}|.
\end{equation}
\begin{remark} 
The expression in Lemma~\ref{lm:Phi2nd} for the channel
$\widetilde\Phi^k$ may be generalized to define analogous channels
between bosonic many-particle spaces where the one-particle space instead of
being 2-dimensional, as $\cH_{1/2}$, could be of arbitrary finite
dimension. We conjecture that Theorem~\ref{thm:Phimajor} holds also
in this case in the sense that pure condensates majorize these
channels. 
\end{remark}

\section{Proof of the main theorem for the channels $\Phi^k$}\label{sec:main}
\begin{proof}[Proof of Theorem~\ref{thm:Phimajor}]
If $k=2(K-J)\leq0$ then $\Phi^k(|\omega\rangle\langle\omega|)$ is rank
one and the result is obvious. 
We now consider the case $k=2(K-J)>0$. 
We first point out that from (\ref{eq:variational}) it is enough to
consider the rank one case, i.e., $\rho=|\psi\rangle\langle\psi|$ for
some $\psi\in\cH_J$. 
Since $I_{K-J}$ has rank $2(K-J)+1=k+1$ it is clear from (\ref{eq:tildeE-}) 
that $\Phi^k(|\psi\rangle\langle\psi|)$ has rank at most $k+1$. 
It is of course equivalent to consider the channel
$\widetilde\Phi^k$.
Let $\lambda_j^k(\psi)$, $j=0,1,\ldots,2J+k$ be the eigenvalues of 
$\widetilde\Phi^k(|\psi\rangle\langle\psi|)$ in decreasing order and counted with
multiplicity. Let $\phi^k_j$, $j=0,1,\ldots,2J+k$ be the corresponding
orthonormalized eigenvectors. For all $m\geq k$ we have 
$\sum_{j=0}^m\lambda^k_j(\psi)=\Tr\,\widetilde\Phi^k(|\psi\rangle\langle\psi|)=1$. 
The claim is that, moreover,
\begin{equation}\label{eq:msum}
\sum_{j=0}^m\lambda^k_j(\psi)\leq
\sum_{j=0}^m\lambda^{C,k}_j,
\end{equation}
for $m=0,\ldots,k-1$, where $\lambda_j^{C,k}$ are the eigenvalues
for the coherent states, which were given in Lemma~\ref{lm:coherentoutput}.

We shall prove (\ref{eq:msum}) by induction on $m$. For $m=0$ this is
easy since we clearly have from (\ref{eq:tildeE-}) that 
$
\lambda^k_0(\psi)\leq \frac{2J+1}{2K+1}=\lambda_0^{C,k}.
$
Let us now assume that we have proved (\ref{eq:msum}) for all integers
up to $m-1$ for some $m\geq1$ . We want to prove it for $m$. We shall
do this by induction on $k$. For $k\leq m$, (\ref{eq:msum}) is an
equality since both sides are 1. Let us assume, therefore, that we have
proved (\ref{eq:msum}) up to some $k\geq m$. We want to prove it for
$k+1$. 

Since $\phi^{k+1}_0,\ldots, \phi^{k+1}_m\in\cH_{J+(k+1)/2}$ are the
orthonormal eigenvectors corresponding to the $m$ top eigenvalues of
$\widetilde\Phi^{k+1}(|\psi\rangle\langle\psi|)$ we have
\begin{align}
  \sum_{j=0}^m\lambda^{k+1}_j(\psi)=&\frac{(2J+1)!}{(2J+k+2)!}\sum_{j=0}^m\sum_{i_1,\ldots,
    i_{k+1}=\uparrow,\downarrow}
  \langle\phi^{k+1}_j|
  a_{i_{k+1}}^*\cdots
  a_{i_1}^*|\psi\rangle\langle\psi| a_{i_1}\cdots
  a_{i_{k+1}}|\phi^{k+1}_j\rangle
  \nonumber\\
  =&\frac{(2J+1)!}{(2J+k+2)!}\sum_{i_1,\ldots,
    i_{k}=\uparrow,\downarrow}\Tr\left[\Gamma
    a_{i_k}^*\cdots
    a_{i_1}^*|\psi\rangle\langle\psi| a_{i_1}\cdots
    a_{i_{k}}\right]\nonumber\\=&\frac1{(2J+k+2)}\Tr[\Gamma
  \widetilde\Phi^k(|\psi\rangle\langle\psi|)]=
  \frac1{(2J+k+2)}\sum_{j=0}^{2J+k}\lambda^k_j(\psi)\langle\phi_j^k|\Gamma|\phi_j^k\rangle,
  \label{eq:iteration1}
\end{align}
where 
$$
\Gamma=\sum_{j=0}^m\left(a_\uparrow
|\phi^{k+1}_j\rangle\langle\phi^{k+1}_j|a_{\uparrow}^*+a_\downarrow
|\phi^{k+1}_j\rangle\langle\phi^{k+1}_j|a_\downarrow^*\right),
$$
is an operator on the space $\cH_{J+k/2}$.
Observe that since $\phi^{k+1}_j$ are $2J+k+1$ particle states we have
$$
\Tr\Gamma=\sum_{j=0}^m\langle\phi^{k+1}_j|a_{\uparrow}^*a_{\uparrow}+a_{\downarrow}^*a_{\downarrow}|\phi^{k+1}_j\rangle
=(m+1)(2J+k+1).
$$
Likewise, we have from (\ref{eq:commutation}) with $2J$ replaced by
$2J+k$, the operator inequalities 
\begin{align*}
0\leq\Gamma=&
a_{\uparrow}\sum_{j=0}^m|\phi^{k+1}_j\rangle\langle\phi^{k+1}_j|a_{\uparrow}^*+
a_{\downarrow}\sum_{j=0}^m|\phi^{k+1}_j\rangle\langle\phi^{k+1}_j|a_{\downarrow}^*\leq
a_\uparrow a_{\uparrow}^*+a_\downarrow a_\downarrow^*=(2J+k+2)I_{\cH_{J+k/2}}.
\end{align*}
We would therefore get an upper bound to the expression in
(\ref{eq:iteration1}) if $\Gamma$ is replaced by
\begin{equation}\label{eq:Gammaoptimizer}
\sum_{j=0}^{m-1}
(2J+k+2)|\phi_j^{k}\rangle\langle\phi_j^{k}|
+(2J+k+1-m)|\phi_m^{k}\rangle\langle\phi_m^{k}|.
\end{equation}
This gives the bound
\begin{align*}
\sum_{j=0}^m\lambda^{k+1}_j(\psi)\leq&
\frac{2J+k+1-m}{2J+k+2}\lambda^{k}_m(\psi)
+\sum_{j=0}^{m-1}\lambda^{k}_j(\psi)
\\=&\frac{2J+k+1-m}{2J+k+2}\sum_{j=0}^{m}\lambda^{k}_j(\psi)
+\frac{m+1}{2J+k+2}\sum_{j=0}^{m-1}\lambda^{k}_j(\psi).
\end{align*}
We conclude from the induction hypotheses on both $m$ and $k$ that 
$$
\sum_{j=0}^m\lambda^{k+1}_j(\psi)\leq\frac{2J+k+1-m}{2J+k+2}\sum_{j=0}^{m}\lambda^{C,k}_j
+\frac{m+1}{2J+k+2}\sum_{j=0}^{m-1}\lambda^{C,k}_j=\sum_{j=0}^m\lambda^{C,k+1}_j(\psi).
$$
That the last recursive identity holds for the coherent eigenvalues
follows since for coherent states we know from
(\ref{eq:Gammacoherent}) that $\Gamma$ is, in fact, equal to the
optimizing expression (\ref{eq:Gammaoptimizer}). It can also be seen
from the explicit formulas in Lemma~\ref{lm:coherentoutput}
(e.g. using induction).  The induction is thus complete.
\end{proof}
Note that as a special case we have seen in the proof that for $k=2(K-J)\geq0$
\begin{equation}\label{eq:normPhi-}
\|\Phi^k(\rho)\|\leq \frac{2J+1}{2K+1}.
\end{equation}
\section{The classical limit of the channels $\Phi^k$}\label{sec:classicallimit}
In this section we consider the limit as the dimension of the output
space tends to infinity, i.e., $K\to\infty$. 
\begin{lm}[The large $K$ limit of coherent state outputs]\label{lm:coherentlimit}
Let $f:[0,1]\to\R$ be a continuous function. Then
\begin{equation}\label{eq:cllimit}
\lim_{K\to\infty}
\frac{2J+1}{2K+1}\Tr_K\left[f\left(\frac{2K+1}{2J+1}\Phi^k(|\uparrow\srangle{J}\slangle{J}\uparrow|)\right)\right]
=\frac{2J+1}{4\pi}\int_{\rS^2}f(|\slangle{J}\omega|\uparrow\srangle{J}|^2)d\omega.
\end{equation}
\end{lm}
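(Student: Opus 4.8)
The plan is to turn the left-hand side into a Riemann sum. By Lemma~\ref{lm:coherentoutput} the coherent output $\widetilde\Phi^k(|\uparrow\srangle{J}\slangle{J}\uparrow|)$ has eigenvalues $\lambda_j^{C,k}$; since $\widetilde\Phi^k$ and $\Phi^k$ agree up to conjugation by the anti-unitary $U_J$, which maps coherent states to coherent states, the $SU(2)$-invariance asserted in Theorem~\ref{thm:Phimajor} shows that $\Phi^k(|\uparrow\srangle{J}\slangle{J}\uparrow|)$ has the very same spectrum $\lambda_j^{C,k}$. As the trace only sees eigenvalues, I would write the left-hand side as
$$
\frac{2J+1}{2K+1}\sum_{j=0}^{2K}f(\mu_j^k),\qquad \mu_j^k:=\frac{2K+1}{2J+1}\lambda_j^{C,k}.
$$

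Next I would record the rescaled eigenvalues explicitly. Using $2K+1=2J+k+1$ the prefactor in Lemma~\ref{lm:coherentoutput} cancels, and for $0\le j\le k$
$$
\mu_j^k=\frac{k!\,(2J+k-j)!}{(2J+k)!\,(k-j)!}=\frac{\binom{2J+k-j}{2J}}{\binom{2J+k}{2J}}=\prod_{r=1}^{2J}\frac{(k-j)+r}{k+r},
$$
while $\mu_j^k=0$ for $k<j\le 2K$. The $2K-k=2J$ zero eigenvalues contribute $\frac{2J+1}{2K+1}\cdot 2J\cdot f(0)\to0$, so they may be dropped. The product form makes the limit transparent: with $y=1-j/k\in[0,1]$ each factor obeys $\big|\frac{(k-j)+r}{k+r}-y\big|=\frac{r(1-y)}{k+r}\le\frac{2J}{k}$, so, all factors lying in $[0,1]$,
$$
\big|\mu_j^k-(1-j/k)^{2J}\big|\le\frac{(2J)^2}{k}
$$
uniformly in $j$. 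Since $f$ is bounded and uniformly continuous on $[0,1]$, replacing $\mu_j^k$ by $(1-j/k)^{2J}$ perturbs the sum by $o(1)$, whence
$$
\frac{2J+1}{2K+1}\sum_{j=0}^{k}f(\mu_j^k)=\frac{(2J+1)(k+1)}{2K+1}\cdot\frac{1}{k+1}\sum_{j=0}^{k}f\big((1-j/k)^{2J}\big)\longrightarrow(2J+1)\int_0^1 f\big((1-x)^{2J}\big)\,dx,
$$
the arrow being convergence of the Riemann sum together with $\frac{(2J+1)(k+1)}{2K+1}\to 2J+1$.

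It remains to identify the right-hand side with this integral. The Bloch overlap is $|\slangle{J}\omega|\uparrow\srangle{J}|^2=\cos^{4J}(\theta/2)$, with $\theta$ the polar angle of $\omega$; integrating out the azimuth and substituting $u=\cos^2(\theta/2)$ (so $\sin\theta\,d\theta=-2\,du$) gives
$$
\frac{2J+1}{4\pi}\int_{\rS^2}f\big(|\slangle{J}\omega|\uparrow\srangle{J}|^2\big)\,d\omega=(2J+1)\int_0^1 f(u^{2J})\,du,
$$
which equals the displayed limit after the change of variable $u=1-x$, completing the proof.

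The one genuinely delicate point is the uniform control of $\mu_j^k$ as $j\to k$, where the naive estimate $\ln\!\big(1+\frac{2J}{k-i}\big)\approx\frac{2J}{k-i}$ breaks down because $k-i$ is no longer large. The factored binomial identity for $\mu_j^k$ sidesteps this entirely, reducing everything to the elementary per-factor bound above; once that identity is in hand, the remainder is a routine Riemann-sum computation.
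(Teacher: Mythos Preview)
Your proof is correct and follows the same route as the paper: both reduce the sphere integral to $(2J+1)\int_0^1 f(t^{2J})\,dt$ via $|\slangle{J}\omega|\uparrow\srangle{J}|^2=\cos^{4J}(\theta/2)$, and both express the trace using the explicit eigenvalues from Lemma~\ref{lm:coherentoutput}. The paper simply declares the convergence of the resulting sum to the integral ``an easy exercise which we leave to the reader''; your product identity $\mu_j^k=\prod_{r=1}^{2J}\frac{(k-j)+r}{k+r}$ and the ensuing uniform bound $|\mu_j^k-(1-j/k)^{2J}|\le (2J)^2/k$ constitute a clean way to carry out that exercise.
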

\begin{proof} We may of course replace the channel $\Phi^k$ by
  $\widetilde\Phi^k$. This a simple calculation based on the explicit
  expressions Lemma~\ref{lm:coherentoutput}. If $\theta$ denotes the
  polar angle of $\omega$, i.e., $\cos(\theta)$ is the $z$-component
  of $\omega$ then
  $|\slangle{J}\omega|\uparrow\srangle{J}|^2=\cos^{4J}(\theta/2)$.
  The integral over the sphere may hence be rewritten as
  $$
  \frac{2J+1}{4\pi}\int_{\rS^2}f(|\slangle{J}\omega|\uparrow\srangle{J}|^2)d\omega
  =(2J+1)\int_0^1f(t^{2J})dt.
  $$
  On the other hand the explicit eigenvalues in
  Lemma~\ref{lm:coherentoutput} give
  \begin{align*}
    \frac{2J+1}{2K+1}\Tr_K\left[f\left(\frac{2K+1}{2J+1}\Phi^k(|\uparrow\srangle{J}\slangle{J}\uparrow|)\right)\right]
    =&\frac{2J+1}{2K+1}\sum_{j=0}^{2(K-J)}f\left(\left(\frac{2K+1}{2J+1}\right)\lambda_j^C\right)
    \\=& \frac{2J+1}{2K+1}\sum_{j=0}^{2(K-J)}f\left(\frac{(2(K-J))!(2K-j)!}{(2K)!((2(K-J)-j)!}\right).
  \end{align*}
  It is an easy exercise which we leave to the reader to show that
  this converges to the above integral in the limit as $K\to\infty$. 
\end{proof}
It can be proved that the same limiting equality (\ref{eq:cllimit}) holds even  if 
$|\uparrow\srangle{J}\slangle{J}\uparrow|$ is replaced by any density
matrix $\rho$ on $\cH_J$, at least for a large class of
functions $f$. We will not do this here.  Instead, we shall restrict
ourselves to an inequality similar to (\ref{eq:cllimit}) for concave
functions $f$. To do this we shall 
use the Berezin-Lieb inequality (\ref{eq:BL}). 

\begin{lm}[The classical integral dominate the trace of concave
  functions of $\Phi$]\label{lm:berezinlieb}
Assume $f:[0,1]\to\R$ is a concave function. Then for any density
matrix $\rho$ on $\cH_J$ we have for all integers $k=2(K-J)\geq0$ 
$$
\frac{2J+1}{2K+1}\Tr_K\left[f\left(\frac{2K+1}{2J+1}\Phi^k(\rho)\right)\right]
\leq\frac{2J+1}{4\pi}\int_{\rS^2}f(\slangle{J}\omega|\rho|\omega\srangle{J})d\omega.
$$
\end{lm}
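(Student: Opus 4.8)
The plan is to apply the Berezin--Lieb inequality (\ref{eq:BL}) on the \emph{output} space $\cH_K$ to the density matrix $\Phi^k(\rho)$, after rescaling the concave function to match the normalization in the statement. Concretely, I would introduce $g(t)=\frac{2J+1}{2K+1}f\!\left(\frac{2K+1}{2J+1}t\right)$, which is concave because $f$ is and $g$ is an affine reparametrization of $f$ scaled by the positive constant $\frac{2J+1}{2K+1}$. By the norm bound (\ref{eq:normPhi-}) the spectrum of $\Phi^k(\rho)$ lies in $[0,\frac{2J+1}{2K+1}]$, so $\frac{2K+1}{2J+1}\Phi^k(\rho)$ has spectrum in $[0,1]$ and $f$ is evaluated only where it is defined; after a harmless concave extension of $g$ to all of $[0,1]$ (its values on the relevant spectrum and on the output lower symbol are unaffected), the inequality (\ref{eq:BL}) applied on $\cH_K$ reads
\[
\Tr_K g(\Phi^k(\rho))\le \frac{2K+1}{4\pi}\int_{\rS^2} g\big(\slangle{K}\omega|\Phi^k(\rho)|\omega\srangle{K}\big)\,d\omega,
\]
whose left side is exactly $\frac{2J+1}{2K+1}\Tr_K[f(\frac{2K+1}{2J+1}\Phi^k(\rho))]$, the quantity in the Lemma.

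The crux is to express the output lower symbol of $\Phi^k(\rho)$ through the input lower symbol of $\rho$. I would claim
\[
\frac{2K+1}{2J+1}\,\slangle{K}\omega|\Phi^k(\rho)|\omega\srangle{K}=\slangle{J}(-\omega)|\rho|(-\omega)\srangle{J},
\]
where $-\omega$ denotes the antipode of $\omega$. By linearity it suffices to check $\rho=|\psi\srangle{J}\slangle{J}\psi|$. The formula from Section~\ref{sec:formula} gives $\slangle{K}\omega|\Phi^k(|\psi\srangle{J}\slangle{J}\psi|)|\omega\srangle{K}=\frac{2J+1}{2K+1}\|\slangle{J} U_J\psi\|\omega\srangle{K}\|_{K-J}^2$; using $|\omega\srangle{K}=|\omega\srangle{K-J}\otimes|\omega\srangle{J}$ the partial inner product collapses to $\slangle{J} U_J\psi\|\omega\srangle{K}=\slangle{J}U_J\psi|\omega\srangle{J}\,|\omega\srangle{K-J}$, so its squared norm is $|\slangle{J}U_J\psi|\omega\srangle{J}|^2$. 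Finally, the defining property (\ref{eq:U_Laction}) forces $U_J|\omega\srangle{J}$ to be the eigenvector of $\omega\cdot{\bf S}_J$ with eigenvalue $-J$, i.e.\ the coherent state $|{-\omega}\srangle{J}$ up to a phase; combined with (\ref{eq:UL}) this yields $|\slangle{J}U_J\psi|\omega\srangle{J}|^2=|\slangle{J}(-\omega)|\psi\srangle{J}|^2=\slangle{J}(-\omega)|\rho|(-\omega)\srangle{J}$, proving the claim. Tracking the antipodal reflection that the anti-unitary $U_J$ produces through the partial inner product is the main obstacle; everything else is bookkeeping.

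Assembling, I would substitute the claim into the right-hand side above and unwind $g$ to get
\[
\frac{2K+1}{4\pi}\int_{\rS^2} g\big(\slangle{K}\omega|\Phi^k(\rho)|\omega\srangle{K}\big)\,d\omega=\frac{2J+1}{4\pi}\int_{\rS^2} f\big(\slangle{J}(-\omega)|\rho|(-\omega)\srangle{J}\big)\,d\omega.
\]
Since the surface measure on $\rS^2$ is invariant under $\omega\mapsto-\omega$, the last integral equals $\frac{2J+1}{4\pi}\int_{\rS^2} f(\slangle{J}\omega|\rho|\omega\srangle{J})\,d\omega$, which is the right-hand side of the Lemma; chaining the two displays finishes the proof. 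Equivalently, one may run the argument through $\widetilde\Phi^k$, for which Section~\ref{sec:formula} gives the cleaner antipode-free identity $\frac{2K+1}{2J+1}\slangle{K}\omega|\widetilde\Phi^k(\rho)|\omega\srangle{K}=\slangle{J}\omega|\rho|\omega\srangle{J}$, and then pass back via $\Phi^k(\rho)=\widetilde\Phi^k(U_J^{-1}\rho U_J)$ using the same antipodal invariance.
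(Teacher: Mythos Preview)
Your proof is correct and follows essentially the same route as the paper: apply the Berezin--Lieb inequality on the output space $\cH_K$ and reduce to computing the lower symbol of the channel output in terms of the input lower symbol. The paper streamlines the argument by passing at once to $\widetilde\Phi^k$ and invoking (\ref{eq:tildeE-}) to get $\slangle{K}\omega|\widetilde\Phi^k(\rho)|\omega\srangle{K}=\frac{2J+1}{2K+1}\slangle{J}\omega|\rho|\omega\srangle{J}$ directly, exactly the antipode-free alternative you note at the end; your detour through $\Phi^k$ and the antipodal reflection is just the same identity seen through the conjugation $\Phi^k(\rho)=\widetilde\Phi^k(U_J^{-1}\rho\,U_J)$.
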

\begin{proof}
Again we consider the equivalent channel $\widetilde\Phi^k$.
The result follows from the Berezin-Lieb inequality (\ref{eq:BL}) if we can show that
the lower symbol of $\widetilde\Phi^k(\rho)$ satisfies
$$
\slangle{K}\omega|\widetilde\Phi^k(\rho)|\omega\srangle{K}=\frac{2J+1}{2K+1}
\slangle{J}\omega|\rho|\omega\srangle{J}.
$$
This is straightforward from (\ref{eq:tildeE-}).
\end{proof} 
We are now in a position to prove the main result Theorem~\ref{thm:conjecture}. 
In fact it is the analog of the main Theorem~\ref{thm:Phimajor} or
rather the equivalent formulation Corollary~\ref{cl:main} for the
classical map $\Phi^\infty$ from density matrices on $\cH_J$
to functions on (the classical phase space) $\rS^2$.
For classical functions the trace is replaced with the integral
over phase space. 
\begin{proof}[Proof of Theorem~\ref{thm:conjecture}]
From  Corollary~\ref{cl:main} and Lemma~\ref{lm:berezinlieb} we have
for all integers $k=2(K-J)\geq0$
\begin{align*}
\frac{2J+1}{2K+1}\Tr_K\left[f\left(\frac{2K+1}{2J+1}\Phi^k(|\uparrow\srangle{J}\slangle{J}\uparrow|)\right)\right]
\leq&\frac{2J+1}{2K+1}\Tr_K\left[f\left(\frac{2K+1}{2J+1}\Phi^k(\rho)\right)\right]\\
\leq\frac{2J+1}{4\pi}\int_{\rS^2}f(\slangle{J}\omega|\rho|\omega\srangle{J})d\omega.
\end{align*}
The result now follows from Lemma~\ref{lm:coherentlimit}.
\end{proof}

\appendix
\section{Proof of the generalized Wehrl conjecture,
  Theorem~\ref{thm:glauber}}

It is enough to prove Theorem~\ref{thm:glauber} for $n=1$. The general case follows by
induction as follows. Assume we have proved it for $n-1$. Then for
each $(p',q')\in\R^{n-1}$ we define an operator
$\widetilde\rho_{p',q'}$ on $L^2(\R)$ by
$$
\langle\phi|\widetilde\rho_{p',q'}|\psi\rangle=
\langle\phi|\otimes\langle p',q'|\rho|p',q'\rangle\otimes|\psi\rangle.
$$
Then 
$$
\rho_{p',q'}=(\Tr_{L^2(\R)}\widetilde\rho_{p',q'})^{-1}\widetilde\rho_{p',q'}
$$
is a density matrix on $L^2(\R)$ and we get from the inequality for
$n=1$ that 
\begin{align*}
  (2\pi)^{-n} \int_{\R^n\times \R^n}f(\rho^{\rm cl}(p,q) )dp dq
  \geq&(2\pi)^{-n} \int_{\R^n\times
    \R^n}f(\Tr_{L^2(\R)}\widetilde\rho_{p',q'}|\langle
  0,0|p_n,q_n\rangle|^2)dp dq.
\end{align*}
We have, however, that $\Tr_{L^2(\R)}\widetilde\rho_{p',q'}=\langle
p',q'|\Tr_n\rho|p',q'\rangle$, where $\Tr_n\rho$ is the density matrix
on $L^2(\R^{n-1})$ obtained by taking the partial trace on the $n$-th
variable. Thus, from the induction hypothesis, we find
\begin{align*}
(2\pi)^{-n} \int_{\R^n\times \R^n}f(\rho^{\rm
  cl}(p,q) )dp dq \geq&(2\pi)^{-n} \int_{\R^n\times \R^n}f(|\langle
0,0|p',q'\rangle|^2|\langle 0,0|p_n,q_n\rangle|^2)dpdq\\
=&(2\pi)^{-n} \int_{\R^n\times \R^n}f(|\langle
0,0|p,q\rangle|^2)dpdq.
\end{align*}

It remains to prove (\ref{eq:genwehrl}) in the case $n=1$.  We first
observe that, by possibly replacing $f(t)$ by $f(t)+at$, we can assume
that $f$ is non-negative. Moreover, using the monotone convergence
Theorem we can assume that $f$ is piecewise linear. In this case we
have the inequality
$$
|f(x)-f(y)| \leq C |x-y|
$$
for some $C>0$ and all $x,y\in [0,1]$.  Hence 
for all density matrices $\rho_1$, $\rho_2$ we have 
\begin{align*}
  \int_{\R^n\times \R^n}|f(\rho_1^{\rm
    cl}(p,q) )-f(\rho_2^{\rm
    cl}(p,q) )|dp dq \leq& C
  \int_{\R^n\times \R^n}|\rho_1^{\rm
    cl}(p,q) -\rho_2^{\rm
    cl}(p,q) |dp dq \\
  \leq&C\int_{\R^n\times \R^n}\langle p,q||\rho_1-\rho_2||p,q\rangle
  dp dq \\=&
  C\|\rho_1-\rho_2\|_1,
\end{align*}
where the the norm on the right is the trace norm. Hence it 
is enough to prove (\ref{eq:genwehrl}) for a subset of density
matrices that is dense in trace norm.

Let $|n\rangle$ denote the eigenfunctions of the harmonic oscillator 
$$
\left(-\frac{d^2}{dq^2}+q^2\right)|n\rangle=(2n+1)|n\rangle.
$$
We will prove (\ref{eq:genwehrl}) for the dense family of $\rho$ satisfying the property that
there exists a positive integer $N$ such that 
\begin{equation}\label{eq:rhoassumption}
\langle
n|\rho|m\rangle=0\quad\text{if }n>N \text{ or }m>N. 
\end{equation}
We shall apply the convenient complex notation, where
$z=2^{-1/2}(q+ip)$ and $\overline{z}=2^{-1/2}(q-ip)$. The Glauber coherent states may then be written
$$
|p,q\rangle=|z\rangle=\sum_{n=0}^\infty e^{-|z|^2/2}\frac{z^n}{\sqrt{n!}}|n\rangle.
$$
We identify the subspace $\cH_J$ with
span$\{|0\rangle,\ldots,|2J\rangle\}$ in $L^2(\R)$ in such a way that
$|M\srangle{J}=|M+J\rangle$ for $M=-J,\ldots,J$. Moreover, we also
identify the 3-sphere with the complex plane through stereographic
projection, such that the measure is $(1+|z|^2/4)^{-2}d^2z$.  With
these identification we can conveniently write the Bloch coherent
states (see \cite{L1}) as
$$
|z\srangle{J}=\sum_{n=0}^{2J}{2J\choose n}^{1/2}(1+|z|^2/4)^{-J}(\overline{z}/2)^n|n\rangle.
$$
It is now straightforward to see that if $\rho$ satisfies
assumption (\ref{eq:rhoassumption}) then 
$$
\langle z|\rho|z\rangle = \lim_{J\to\infty}\slangle{J} (2/J)^{\frac12}\,\overline{z}
|\,\rho\,|(2/J)^{\frac12}\overline{z}\srangle{J}.
$$
Using the fact that $(1+|z|^2/(2J))^{-2J}\leq (1+|z|^2/(2K))^{-2K}$
for all $z\in \C$ and all $J\geq K$, we easily see that for $J\geq N+2$
$$
\slangle{J}(2/J)^{\frac12}\overline{z}|\,\rho\,|(2/J)^{\frac12}\overline{z}\srangle{J}
\leq C_{\rho,N} (1+|z|^2/(2(N+2)))^{-2}
$$
if $\rho$ satisfies (\ref{eq:rhoassumption}). Since $f$ is
non-negative and bounded above by $t\mapsto at$ for some $a>0$ we
immediately find from dominated convergence that 
\begin{align}
  \int_\C f(\langle z|\rho|z\rangle)d^2z=&\lim_{J\to\infty}\int_\C f(\slangle{J}
  (2/J)^{\frac12}\overline{z}|\,\rho\,|(2/J)^{\frac12}\overline{z}\srangle{J})(1+|z|^2/(2J))^{-2}d^2z\nonumber\\
=&\lim_{J\to\infty}\frac{J}2\int_\C f(\slangle{J}
  z|\rho|z\srangle{J})(1+|z|^2/4)^{-2}d^2z.\label{eq:gaussianlimit}
\end{align}
Our main Theorem~\ref{thm:conjecture} and the observation that
$|0\rangle=|0\srangle{J}$ implies that 
$$
\int_\C f(\slangle{J}
  z|\rho|z\srangle{J})(1+|z|^2/4)^{-2}d^2z\geq\int_\C f(|\slangle{J}
  z|0\rangle|^2)(1+|z|^2/4)^{-2}d^2z.
$$
Since the density matrix $|0\rangle\langle0|$ clearly satisfies
(\ref{eq:rhoassumption}) we see from our main result and
(\ref{eq:gaussianlimit})  that (\ref{eq:genwehrl}) holds for all
$\rho$ satisfying (\ref{eq:rhoassumption}) and hence by approximation
for all density matrices on $L^2(\R)$. \hfill\qed

\end{document}